\newlength{\ww}
\newlength{\www}
\newcommand\w[1]{\makebox[\ww]{$#1$}}
\newtheorem{theorem}{Theorem}[section]
\newtheorem{lemma}[theorem]{Lemma}
\newtheorem{proposition}[theorem]{Proposition}
\newtheorem{remark}[theorem]{Remark}
\newtheorem{definition}[theorem]{Definition}
\newtheorem{corollary}[theorem]{Corollary}
\def\idty{\mathbbm{1}} %id
\def\EF{\Phi} %electric field, originally E
\def\dr{m} %denominator in cf expansion
\def\We{W_{\EF}}  %electric walk
\def\kk{i}   %cf expansion index
\def\norm #1{\Vert #1\Vert}
\def\Norm #1{\left\Vert #1\right\Vert}
\def\abs#1{\vert#1\vert}
\def\absbig#1{\left\vert#1\right\vert}
\def\scp#1#2{\langle#1,#2\rangle}                                %Scalar Product
\def\scpbig#1#2{\left\langle#1,#2\right\rangle}                                %Scalar Product Big
\def\HH{\mathcal H}
\def\capconst{K}    %capital constant
\def\smallconst{c}    %small constant
\def\reals{\mathbb{R}}                             % the real numbers
\def\integers{\mathbb{Z}}                             % the integer numbers
\def\complex{\mathbb{C}}                           % the complex numbers
\def\torus{\mathbb{T}}                              %the unit circle
\def\re{\Re e\,}
\def\im{\Im m\,}
\def\DC{D\!C}   %Diophantine fields
\def\trans{T}
\newcommand{\transz}[2][z]{\trans_{#2}(#1)}
\def\lyap{\gamma}
\def\inv{{-1}}
\DeclareMathOperator\tr{tr\,}                                   % Trace new (avoids italics in theorem environments)
\def\pauli{\sigma}
\DeclareMathOperator\mes{mes}
\DeclareMathOperator\diag{diag}
\def\Pll#1{P_{\leq#1}}
\def\Pgg#1{P_{>#1}}
\def\un{unitary\xspace}
\def\qw{quantum walk\xspace}
\def\qws{quantum walks\xspace}
\def\wrt{with respect to\xspace}
\pgfplotsset{compat=newest}
\begin{document}

\title{Anderson localization for electric quantum walks and skew-shift CMV matrices}

\author{C. Cedzich}
\affiliation{Laboratoire de Recherche en Informatique (LRI), Universit\'{e} Paris Sud, CNRS, Centrale Sup\'{e}lec, \\B\^{a}t. 650, Rue Noetzlin, 91190 Gif-sur-Yvette, France}
\affiliation{Institut f\"ur Theoretische Physik, Leibniz Universit\"at Hannover, Appelstr. 2, 30167 Hannover, Germany}
\author{A.~H. Werner}
\affiliation{{QMATH}, Department of Mathematical Sciences, University of Copenhagen, Universitetsparken 5, 2100 Copenhagen, Denmark,}
\affiliation{{NBIA}, Niels Bohr Institute, University of Copenhagen, Denmark}

\begin{abstract}
We consider the spectral and dynamical properties of one-dimensional quantum walks placed into homogenous electric fields according to a discrete version of the minimal coupling principle. We show that for all irrational fields the absolutely continuous spectrum of these systems is empty, and prove Anderson localization for almost all (irrational) fields. This result closes a gap which was left open in the original study of electric quantum walks: a spectral and dynamical characterization of these systems for typical fields. Additionally, we derive an analytic and explicit expression for the Lyapunov exponent of this model. Making use of a connection between quantum walks and CMV matrices our result implies Anderson localization for CMV matrices with a particular choice of skew-shift Verblunsky coefficients as well as for quasi-periodic unitary band matrices.
\end{abstract}

\maketitle

\section{Introduction}
Time-discrete quantum walks have recently gained a lot of attention from very different points of view as a model in computer science, quantum physics and mathematics: Considered as the quantum evolution of a single particle with internal degree of freedom on a lattice or graph in discrete time-steps and with bounded hopping length, they can serve as the basis for single particle quantum simulators. In this context, quantum walks have been shown to capture many single and few particle quantum effects such as ballistic transport \cite{ambainis2001one,Grimmet,TimeRandom}, decoherence \cite{SpacetimeRandom,TimeRandom,schreiber2011decoherence}, dynamical localization \cite{SpaceRandom,Joye_Merkli,joye_d_dim_loc} and the formation of bound states \cite{ahlbrecht2012molecular, schreiber20122d, sansoni2012two} both with regards to theoretical as well as experimental physics in diverse architectures \cite{karski2009quantum,zahringer2010realization,schmitz2009quantum,PhysRevLett.104.050502,PhysRevA.72.062317,topo_exp}. More recently, quantum walks have been shown to provide a testbed for symmetry protected topological order where the corresponding invariants can be shown to provide a complete topological classification without assumptions on translation invariance \cite{UsOnTop_short,UsOnTop_long,UsOnTI,WeAreSchur}.

Complementary to this quantum simulation point of view, quantum walks can be seen as a generalization of classical random walks to the quantum regime. Here, the increased ballistic spreading behaviour as compared to classical diffusion has interesting algorithmic applications which include for example search algorithms, element distinctness, quantum information processing and applications to the graph isomorphism problem \cite{childs2003exponential, ambainis2003quantum, childs2009universal,berry2011two,portugal2018quantum}. From a practical point of view it is hence important to ascertain how experimental imperfections might change the performance of these algorithms. Results with regard to spatial and temporal fluctuations in the coin parameters have been obtained in a number of recent papers, showing decoherence effects that imply a transition from ballistic to diffusive spreading for temporal fluctuations \cite{TimeRandom,Joye2011}, Anderson or even dynamical localization in the case of time-independent disorder \cite{SpaceRandom,Joye_Merkli,joye_d_dim_loc}, as well as diffusive spreading if both types of disorder are present \cite{SpacetimeRandom}.

In the following, we will be concerned with similar questions in the quasi-periodic regime. In line with the investigation of simulable physical effects in discrete time, electric and magnetic fields were recently introduced to the quantum walk setup using a discrete analogue of the minimal coupling principle \cite{ewalks,UsOnMag}. It has been found that an external discrete electric field changes the spectral properties and the dynamical behaviour of a given one-dimensional quantum walk dramatically \cite{ewalks}. Whereas for rational fields Bloch-like oscillations are observed on short time scales before the ballistic behaviour dominates eventually, the case of irrational fields is more involved: it was shown that irrational fields which are extremely well approximable by rational ones lead to purely singular continuous spectrum and hierarchical motion whereas for badly approximable fields like the Golden Ratio numerical studies suggest Anderson localization, see the summary in Table \ref{tab:ewalks}.

Yet, the important question about the generic behaviour has been left open: while the set of rational, extremely well approximable and badly approximable fields each constitute a dense subset in the set of all fields they all have measure zero. This gap is filled in the present work in which we show that Lebesgue-typical fields lead to Anderson localization, i.e. pure point spectrum with exponentially decaying eigenfunctions.

\begin{table}[h]
\begin{center}
\begin{tabular}{c||c|c|c||c}
       &rational&\ almost rational&very irrational&almost all\\\hline
  cont. fract.&\multirow{2}{*}{terminates}  & $c_i\to\infty$ & \multirow{2}{*}{$c_i$ bounded} & \\
  expansion &           &   rapidly & \\\hline
  \multirow{2}{*}{propagation}& ballistic&\multirow{2}{*}{hierarchical} & \multirow{2}{*}{localized}&Anderson\\
         &w. revivals    & & & localization   \\\hline
         \multirow{2}{*}{$\sigma(\We)$}    &   \multirow{2}{*}{$\sigma_{ac}(\We)$}    &   \multirow{2}{*}{$\sigma_{sc}(\We)$}    &   \multirow{2}{*}{$\sigma_{pp}(\We)$}  &   \multirow{2}{*}{$\sigma_{pp}(\We)$}      \\
         &&& \\\hline
   \multirow{2}{*}{status} & \multirow{2}{*}{proved}   &\multirow{2}{*}{proved} & \multirow{2}{*}{num. evidence}&\multirow{2}{*}{proved}\\ &&&
\end{tabular}
\caption{Overview on the connection between the properties of the electric field $\EF/(2\pi)$, its \index{continued fraction}continued fraction coefficients $c_i$, and the propagation and the spectrum of the electric walk $\We$. The results for rational, almost rational and very irrational fields are proved in \cite{ewalks}. The main objective of this manuscript is to prove the result in the last column where fields do not admit a characterization in terms of $c_i$.}
\label{tab:ewalks}
\end{center}
\end{table}

Proving localization statements for quasi-periodic systems has attracted a lot of attention over the last decades. At the center of this attention is the almost Mathieu operator, which is a tight-binding Hamiltonian with cosine potential and variable coupling constant whose spectral and dynamical properties depend sensitively on the parameter choices. For example, Anderson localization was proved for this operator (or generalizations thereof) at large coupling and Diophantine frequencies for almost all offsets in \cite{FroehlichMSAquasi,Sinai1987}, and non-perturbatively in \cite{bourgain_goldstein}. In contrast, in the critical case where the operator describes the motion of a single particle in a perpendicular magnetic field the spectrum is almost surely purely singular continuous \cite{gordon1997}.

The unitary analogue of the critical almost Mathieu operator introduced in \cite{Linden2009} is a one-dimensional shift-coin quantum walk with quasi-periodic coin. In close analogy to the self-adjoint case, this quantum walk has purely singular continuous spectrum for all irrational frequencies and almost all offsets \cite{InhomogeneousWalkFillman}. From a physics point of view this model describes the discrete evolution of a quantum mechanical particle on a two-dimensional lattice under the influence of a discrete homogeneous magnetic field \cite{UsOnMag,UsOnCantor}.

Coined quantum walks are closely related to the subclass of doubly-infinite ``sparse'' CMV matrices for which every second Verblunsky coefficient vanishes \cite{CMV,CGMV}. Recently, CMV matrics with quasi-periodic Verblunksy coefficients were shown to obey Anderson localization for almost all irrational frequencies under the assumption of positivity of the Lyapunov exponent \cite{wang_damanik}. However, due to the nature of the unitary equivalence between quantum walks and CMV matrices the quasi-periodicity of the coins does not automatically translate to quasi-periodicity of the Verblunsky coefficients. A class of quantum walks for which this association fails is the electric walks studied in this paper. Accordingly, our results imply Anderson localization for sparse CMV matrices where the non-vanishing Verblunsky coefficients are generated by a two-dimensional skew-shift. Using a technique known as ``sieving'' this proves Anderson localization for also for the full CMV matrices.

Another type of models closely related to the systems in this paper is the unitary band matrices studied in \cite{AlainUnitaryBandMats} which contain CMV matrices as a subset. The authors consider quasi-periodic models and show that for Liouville frequencies the spectrum is purely singular continuous, which is in accordance with the findings in \cite{ewalks}. We complement their results and show that for almost all frequencies quasi-periodic unitary band matricies obey Anderson localization.

The paper is organized as follows: in Section \ref{sec:system} we define the system under consideration, state the main result in Theorem \ref{thm:loclEQW} and discuss its implications in connection with previous work. This proof of the theorem is subsequently given in Sections \ref{sec:no_ac} and \ref{sec:proof_loc}.

\section{System}\label{sec:system}
\subsection{The physical model}
Quantum walks describe the time-discrete evolution of a single particle with an internal degree of freedom on a lattice under the additional assumption of a finite propagation speed. In this paper we consider particles on the one-dimensional lattice $\integers$ with two-dimensional internal degree of freedom which fixes the Hilbert space as $\HH=\ell_2(\integers)\otimes \complex^2$. The unitary timestep operators are \emph{shift-coin quantum walks} given as a product
\begin{align}\label{eq:defShCoin}
  W = CS,
\end{align}
where $C=\bigoplus_\integers C_x$ is called the coin operator with $C_x\in U(2)$ acting only on the internal degree of freedom and $S$ is a conditional shift operator acting on basis states of $\HH$ as $S(\delta_x\otimes e_\pm) = \delta_{x\pm 1}\otimes e_\pm$. In the translation invariant case the coin $C$ acts the same everywhere, i.e. $C=\idty\otimes C_0$ with
\begin{equation}\label{eq:coin_TI}
  C_0=e^{i\eta}\begin{pmatrix} a & b \\ -b^*  & a^* \end{pmatrix},\;\quad \abs{a}^2+\abs{b}^2=1,
\end{equation}
and the walk $W$ generically has purely absolutely continuous spectrum and exhibits ballistic transport \cite{Grimmet,TimeRandom}. In contrast, if the coin is given by an i.i.d. random function $\Omega\ni\omega\mapsto C_\omega$ the spectrum of $W$ is pure point and the walk exhibits dynamical localization \cite{SpaceRandom,bucaj2019localization}.

Recently, it has been studied how discrete electromagnetic fields can be introduced in quantum walk systems by an approach similar to minimal coupling in continuous time \cite{UsOnMag}. In the one-dimensional setting considered here, homogeneous electric fields are introduced via the modification
\begin{align}\label{eq:EQW}
  \We = e^{i (\EF Q + \theta)} W,
\end{align}
where $W$ is the shift-coin quantum walk from \eqref{eq:defShCoin} and $Q$ denotes the position operator $Q(\delta_x\otimes e_\pm) = x\,\delta_x\otimes e_\pm$. As a standing assuption, we take $W$ to be translation invariant throughout this paper. Occasionally we shall write $\We(\theta)$ to make explicit the $\theta$-dependence of the electric walk model. With regards to the spectrum however, it is apparent from \eqref{eq:EQW} that for all $\theta\in[0,2\pi]$
\begin{equation}\label{eq:thetashift}
  z\in\sigma(\We(\theta))\quad\Rightarrow\quad e^{-i\theta}z\in\sigma(\We(0)).
\end{equation}

In accordance with \cite{ewalks} and \cite{UsOnMag} the parameter  $\EF\in[0,2\pi]$ is called the discrete electric field and $\theta\in[0,2\pi]$ describes an arbitrary offset and plays the role of the random parameter in the quasi-periodic operator $\We$. Thus, a homogenous electric field in this setting can be seen as a position dependent linear phase factor applied in each time step. Note that in \eqref{eq:EQW} a particular gauge is chosen which assures that $\We$ is time-independent and therefore makes a discussion of its spectral properties meaningful. In \cite{ewalks} it has been shown that the spectral as well as the dynamical properties of $\We$ depend sensitively on the rationality of $\EF/(2\pi)$: for rational fields the system remains translation invariant after grouping lattice sites together, which implies absolutely continuous spectrum and eventually ballistic transport. Notably, on short timescales of the order of the denominator of $\EF/(2\pi)$ revivals of the initial state occur which are exponentially sharp in this denominator. For irrational fields the picture is more involved: if $\EF/(2\pi)$ is extremely well approximable in terms of its continued fraction approximation - an example being the Liouville numbers - the dynamics of $\We$ is hierarchical in the sense that there is an infinite sequence of exponentially sharper and sharper revivals which alternate with farther and farther excursions. This type of dynamics implies that the spectrum for such fields is purely singular continuous. On the other hand, if $\EF/(2\pi)$ is badly approximable in the sense that the sequence of its continued fraction coefficients is bounded numerical evidence leads to conjecture Anderson localization. Yet, each of these sets of fields is of measure zero. The main goal of this paper is to study the case of Lebesgue-typical fields $\EF$.

\subsection{Transfer matrices}

In order to get a handle on the spectral properties and eigenfunctions of the quasi-periodic electric walk \eqref{eq:EQW} we use a transfer matrix approach that has already proved useful in the disordered setting \cite{SpaceRandom}. For notational convenience in the rest of this paper we identify $\ell_2(\integers)\otimes\complex^2\to\ell_2(\integers)$ via
\begin{equation*}
  \delta_x\otimes e_+\mapsto\delta_{2x},\qquad \delta_x\otimes e_-\mapsto\delta_{2x+1},
\end{equation*}
where $x\in\integers$ labels the lattice sites.
Then, given a shift-coin quantum walk of the form \eqref{eq:defShCoin} any solution $\phi$ to the generalized eigenvalue equation
\begin{align}\label{eq:EWeq}
  W\phi = z\phi
\end{align}
will satisfy the relation \cite{SpaceRandom}
\begin{align*}
  \begin{pmatrix}
    \phi_{2x+1}\\
    \phi_{2x+2}
    \end{pmatrix}=
 \transz{x}\begin{pmatrix}
    \phi_{2x-1}\\
    \phi_{2x}
  \end{pmatrix}\;,
  \end{align*}
  where we introduced the transfer matrix
  \begin{align}\label{eq:transfMat}
  \transz{x} =  \frac{1}{a_x}\begin{pmatrix}
    \frac{\det(C_x)}{z} & c_x\\
    -b_x & z
  \end{pmatrix} \;,
\end{align}
with the coin matrix $C_x$ at lattice site $x$ parametrized by $a_x,b_x,c_x$ and $d_x$. In other words, two consecutive components of $\phi$ satisfying \eqref{eq:EWeq} determine the whole generalized eigenvector by repeatedly applying the transfer matrix. Note that, since the shift-coin walk \eqref{eq:defShCoin} is a finite-difference operator its spectrum is characterized as follows \cite{berezanskiui1968expansions,AlainUnitaryBandMats,DamanikUniformHyperbolicity}:
\begin{lemma}\label{lem:gen_eig}
Let $\psi$ be a solution of \eqref{eq:EWeq} for some $z\in\complex$. Then $\psi$ cannot be polynomially bounded if $z\not\in\sigma(W)$. Vice versa, $\sigma(W)$ is the closure of the set of generalized eigenvalues.
\end{lemma}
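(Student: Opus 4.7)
The lemma is a version of Schnol's theorem adapted to unitary band operators, and my plan is to treat the two claims separately while exploiting that, in the identification $\ell_2(\integers)\otimes\complex^2\simeq\ell_2(\integers)$ set up just before the statement, $W$ is a band matrix of fixed (finite) bandwidth.

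For the first claim I would argue by contrapositive via the Weyl criterion. Assume $\psi\not\equiv 0$ solves $W\psi=z\psi$ and $|\psi_n|\leq C(1+|n|)^k$; by invertibility of the transfer matrices in \eqref{eq:transfMat} no two consecutive entries of $\psi$ can vanish, so either $\psi\in\ell_2(\integers)$ (in which case $\psi$ is already a bona fide eigenvector and $z\in\sigma(W)$) or $M(N):=\sum_{|n|\leq N}|\psi_n|^2\to\infty$. In the latter case, let $\chi_N$ denote the projection onto $\{|n|\leq N\}$ and set $\phi_N:=\chi_N\psi/\|\chi_N\psi\|$. Because $W$ is banded, $(W-z)\chi_N\psi$ is supported in a fixed neighbourhood of $\{\pm N\}$ and its norm is bounded by a constant multiple of $\tilde M(N)^{1/2}$, where $\tilde M(N)$ collects $|\psi_n|^2$ over that neighbourhood. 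The crucial input is a pigeonhole on $M$: if $\tilde M(N)/M(N)\geq\varepsilon>0$ for all large $N$, then $M(N+L)\geq(1+\varepsilon)M(N)$ for a fixed step $L$, giving exponential growth of $M$ that contradicts the polynomial bound inherited from $\psi$. Along the resulting subsequence $N_j\to\infty$ one therefore has $\|(W-z)\phi_{N_j}\|\to 0$, and the Weyl criterion places $z$ in $\sigma(W)$.

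For the second claim — that $\sigma(W)$ equals the closure of the set of generalized eigenvalues — one inclusion is immediate from the first claim. For the reverse inclusion I would invoke the classical Schnol construction: picking a cyclic vector $\varphi$ for $W$, the spectral measure $\mu_\varphi$ has topological support equal to $\sigma(W)$, and using summability of $\sum_n(1+|n|)^{-1-\delta}$ together with a Fubini argument on the spectral representation one shows that $\mu_\varphi$-almost every $z$ admits a polynomially bounded generalized eigenfunction, yielding density of generalized eigenvalues in $\sigma(W)$.

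The main obstacle is the pigeonhole step underpinning the first claim; the rest proceeds routinely once the cutoff vectors are set up correctly. Since the statement is by now standard for self-adjoint and unitary band operators alike, I would not be surprised if the authors simply cite \cite{berezanskiui1968expansions,AlainUnitaryBandMats,DamanikUniformHyperbolicity} rather than give a self-contained proof.
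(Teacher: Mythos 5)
Your closing guess is exactly right: the paper gives no proof of this lemma at all, but simply cites \cite{berezanskiui1968expansions,AlainUnitaryBandMats,DamanikUniformHyperbolicity} in the sentence introducing it, so there is nothing in the paper to compare against except the references. Your sketch is the standard Schnol/Berezanskii argument that those references contain, and it is essentially sound. The first half is fine: for the unitary (hence normal) operator $W$ the Weyl criterion does characterize all of $\sigma(W)$, the commutator $[W,\chi_N]$ is supported in an $O(1)$ neighbourhood of $\pm N$ because $W$ is banded, and the pigeonhole step works as you state it --- if the boundary mass $\tilde M(N)$ were at least $\varepsilon M(N)$ for all large $N$ then $M(N+L)\geq(1+\varepsilon)M(N-L-1)$ forces geometric growth of $M$, contradicting $M(N)=O(N^{2k+1})$, so along some subsequence $\tilde M(N_j)/M(N_j)\to0$ and $\phi_{N_j}$ is a Weyl sequence.

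The one point you should repair is the appeal to a cyclic vector in the second half: a band operator need not be cyclic (and there is no reason the walk $W$ should be), so "pick a cyclic vector $\varphi$" is not available. The standard fix is to replace $\mu_\varphi$ by a measure of maximal spectral type, e.g. $\mu=\sum_{n}2^{-|n|}\scp{\delta_n}{E(\cdot)\delta_n}$, whose topological support is all of $\sigma(W)$ because $\{\delta_n\}$ is a basis; the Fubini/summability argument with weights $(1+|n|)^{-1-\delta}$ then shows that $\mu$-a.e.\ $z$ admits a polynomially bounded generalized eigenfunction, which gives the density of generalized eigenvalues in $\sigma(W)$. With that substitution your argument is complete and is, in substance, the proof the cited literature supplies.
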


In the electric walk \eqref{eq:EQW} the operator $e^{i (\EF Q + \theta)}$ implementing the electric field acts locally, wherefore we can interpret it as an additional coin operator and write $\We$ as a walk with a coin that is a quasi-periodic function of the position, i.e.
\begin{align}\label{eq:EwalkII}
 \We = \left(\bigoplus_{x\in \integers} C_0 e^{i(x \EF + \theta)}\right)\cdot S.
\end{align}
As remarked above, the random offset $e^{i\theta}$ is a global phase factor that merely shifts the spectrum of $\We$. This allows us to absorb the additional phase factor $e^{i\eta}$ coming from the determinant of $C_0$ into the offset $\theta$ and restrict our attention without loss of generality to $C_0\in SU(2)$, i.e. set $\eta=0$ in \eqref{eq:coin_TI}.

Plugging the coin in \eqref{eq:EwalkII} into \eqref{eq:transfMat}, the electric transfer matrix at lattice point $x$ is given by
  \begin{equation}\label{eq:transMatEF}
  \trans_x(\theta,z)=\trans(\tau_\EF^x(\theta),z)=\frac{1}{a}\begin{pmatrix} z^{-1} e^{i (x\EF + \theta)}&-b^*\\-b &z e^{-i(x\EF +\theta)}\end{pmatrix}.
\end{equation}
Here, the quasi-periodic shift defined by
\begin{equation*}
  \tau_\EF:\torus\to\torus,\quad\theta\mapsto\tau_\EF(\theta)=\EF+\theta,
\end{equation*}
is ergodic whenever $\EF/(2\pi)$ is irrational. Clearly, the transfer matrices only depend on the combined parameter $z^\inv e^{i\theta}$. Accordingly, if $z$ lies on the unit circle, averaging over the spectral parameter $z$ is the same as averaging over the offset $\theta$. This again reflects the fact that according to \eqref{eq:thetashift} the spectrum of $\We(\theta)$ and $\We(\theta=0)$ are connected by a shift of the unit circle.

\subsection{The main results}

Our main result in this paper is that Anderson localization occurs naturally for electric quantum walks. Indeed, we can show that this is the case for almost all choices of the electric field, exceptions being rational fields which are known to lead to absolutely continuous spectrum and irrational fields, which have exceptionally good approximations by continued fraction expansions \cite{ewalks}. Moreover, we show that for all irrational fields the absolutely continuous spectrum is empty:
\begin{theorem}\label{thm:loclEQW}
  Let $\We$ be the electric quantum walk defined in \eqref{eq:EwalkII} on $\ell_2(\integers)\otimes\complex^2$ with coin $C_0$ satisfying $\abs{a}<1$. Then:
  \begin{enumerate}
    \item[(i)] for all irrational fields $\EF\in[0,2\pi]$ and all offsets $\theta\in[0,2\pi]$ the absolutely continuous spectrum of $\We$ is empty, i.e. $\sigma_{ac}(\We)=\emptyset$,
    \item[(ii)] for almost all electric fields $\EF\in [0,2\pi]$ and all offsets $\theta\in[0,2\pi]$ $\We$ exhibits Anderson localization, i.e. $\sigma(\We)=\sigma_{pp}(\We)$ and all eigenfunctions decay exponentially.
  \end{enumerate}
\end{theorem}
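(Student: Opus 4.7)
The plan is to treat parts (i) and (ii) in parallel via the transfer-matrix cocycle $\trans_x(\theta,z)$ from \eqref{eq:transMatEF}, whose base dynamics is the irrational rotation $\tau_\EF$ acting on $\torus$.

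For part (i), my approach is (a) to compute the Lyapunov exponent $\lyap(z)$ of the cocycle and then (b) to invoke a Kotani-type theorem for unitary band/CMV matrices. For (a), I would exploit that each $\trans_x(\theta,z)$ depends holomorphically on $w = e^{i\theta}$ and apply Herman's subharmonicity trick: the function $\eta \mapsto L(\eta,z) := \lim_n n^\inv \int \log\|\trans_n(\theta+i\eta,z)\|\,d\theta$ is convex in $\eta$, and asymptotic analysis of $\trans_x$ as $|w|\to\infty$ (where the dominant contribution comes from the diagonal entry $z^\inv w/a$) yields $L(\eta,z) = |\eta| - \log|a|$ for $|\eta|$ large. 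Convexity together with $L(\eta) = L(-\eta)$ then forces $\lyap(z) = L(0,z) \geq -\log|a|$, which is strictly positive by the hypothesis $|a|<1$; the matching upper bound is immediate from the operator norm estimate on a single $\trans_x$. For (b), positivity of $\lyap$ everywhere on the unit circle combined with the ergodicity of $\tau_\EF$ kills the absolutely continuous spectrum by the unitary Kotani theorem; the extension from \emph{almost every} offset $\theta$ to \emph{all} $\theta$ follows from the minimality of $\tau_\EF$ and the consequent $\theta$-independence of the AC spectrum (using also \eqref{eq:thetashift}).

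For part (ii) I would follow the Bourgain--Goldstein scheme, as implemented for CMV matrices by Wang--Damanik. The steps are: (I) promote the pointwise lower bound on $\lyap$ to a quantitative, uniform-in-$z$ large-deviation estimate (LDT) for $n^\inv \log\|\trans_n(\theta,z)\|$ using subharmonicity of $\log\|\trans_n\|$ in $w$; (II) combine the LDT with the avalanche principle to establish off-diagonal exponential decay of the Green's function of $\We$ restricted to finite intervals; (III) use semi-algebraic-set techniques together with a Diophantine condition on $\EF$ (which holds on a full-measure set) to eliminate double resonances, yielding Green's function decay for all $z$ outside a small exceptional set; (IV) conclude via the Schnol-type statement of Lemma \ref{lem:gen_eig} that every polynomially bounded generalized eigenfunction of $\We\phi = z\phi$ decays exponentially, giving $\sigma(\We) = \sigma_{pp}(\We)$ with exponentially localized eigenfunctions.

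The main obstacle is step (III): one must show that for typical $\EF$ the $\theta$-set on which two far-apart intervals simultaneously exhibit near-resonance has negligible measure, by encoding the resonance conditions as semi-algebraic sets of controlled complexity and exploiting Diophantine properties of $\EF$. Additional technical work is needed to adapt the subharmonic LDT and avalanche-principle machinery from the self-adjoint setting to the unitary band-matrix setting \eqref{eq:transMatEF}, and to make all estimates uniform in the spectral parameter $z$ on the unit circle.
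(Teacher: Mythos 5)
Your proposal is correct in outline and, for part (ii), follows essentially the same Bourgain--Goldstein route as the paper: both work with the shift-driven transfer-matrix cocycle \eqref{eq:transMatEF} rather than the CMV/Verblunsky representation (whose base dynamics is a skew-shift, which is precisely why the Wang--Damanik result does not apply directly), and both combine a large deviation estimate, finite-volume Green's function decay, semi-algebraic elimination of double resonances for Diophantine $\EF$, and the Schnol-type Lemma \ref{lem:gen_eig}. The genuine divergence is in part (i): you lower-bound the Lyapunov exponent by Herman's subharmonicity/convexity argument in the complexified phase, whereas the paper conjugates the unimodularized transfer matrices into $SL(2,\reals)$ via Lemma \ref{lem:SU11}, notes that the $z$-dependence is exactly a rotation $R_2(\arg z)$, and applies the Herman--Avila--Bochi \emph{identity}; since for the electric walk averaging over $z$ equals averaging over $\theta$, this yields the exact value $\lyap=\log\abs{a}^\inv$ for every $z$, not merely a lower bound. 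Your route suffices for the theorem (only positivity is needed), but one claim is inaccurate: the matching upper bound is \emph{not} immediate from the norm of a single $\trans_x$, since a single unimodularized step satisfies $\Norm{\trans_x}+\Norm{\trans_x}^\inv=(\tr \trans_x^*\trans_x+2)^{1/2}=2/\abs a$, i.e. $\Norm{\trans_x}=(1+\abs b)/\abs a$, which only gives $\lyap\le\log\bigl((1+\abs b)/\abs a\bigr)$. Two further points you defer as ``technical work'' actually require specific constructions in the unitary setting: finite restrictions cannot be obtained by truncation but by inserting reflecting coins $\alpha\sigma_1$ at the endpoints, with the boundary phases $\pm\alpha,\pm\beta$ chosen to avoid eigenvalues and to secure the lower bound on the resolvent denominator (Definition \ref{def:finite_un_restrictions} and Proposition \ref{prop:exp_decay}); and the Poisson-type formula for $\psi$ needs the factorization $W=\mathcal L\mathcal M$ into block-diagonal unitaries so that $z\mathcal L^*-\mathcal M$ is tridiagonal (Lemma \ref{lem:poisson_formula}). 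Finally, the avalanche principle you invoke in step (II) is not needed here: since the base dynamics is a shift, the paper controls the resolvent entries directly through the explicit formulas \eqref{eq:res_res}--\eqref{eq:res_scp} in terms of transfer-matrix products together with the uniform upper bound of Lemma \ref{lem:lyap_upper_bound}.
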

Here, up to a set of zero Lebesgue measure, the set of electric fields for which we show Anderson localization corresponds to Diophantine fields, i.e. fields $\EF$ for which there is some $A>0$ such that
\begin{equation*}
  \norm{k \EF/(2\pi)}>c\abs k^{-A}
\end{equation*}
for all $k\in\integers\backslash\{0\}$ and $c>0$. In the following we denote the set of Diophantine fields with $\DC$.

\begin{remark}~ %
  \begin{enumerate}
    \item   By \eqref{eq:thetashift} it is enough to prove Theorem \ref{thm:loclEQW} for the special case $\theta=0$.
    \item   In the following, we will concentrate exclusively on coins $C_0$ satisfying $\abs{a}>0$, i.e. coins that are not completely off-diagonal. Our proof techniques based on transfer matrices are not directly applicable in that off-diagonal case. However, since the choice $a=0$ corresponds to a flip of the internal degree of freedom of the walker and therefore effectively implements reflective boundary conditions at every lattice site, $\We$ is block-diagonal and Anderson localization is immediate. Indeed, in this special case even dynamical localization can be shown very easily by observing that the particle is trapped at its original position for all times. For a more general argument which is valid also in the random case, see \cite[Lemma 4.8]{SpaceRandom}.
  \end{enumerate}
\end{remark}

\begin{figure}[t]
\includegraphics[width=0.99\textwidth]{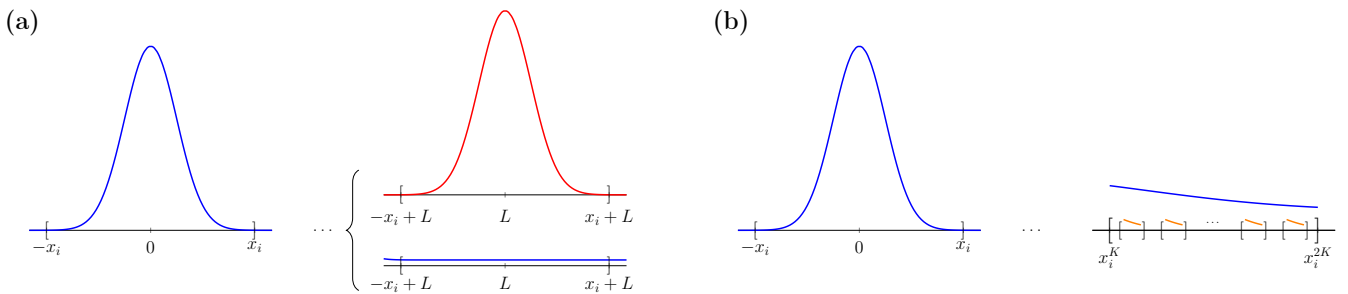}
\caption{\label{fig:proof_schema}Proof strategy: Decay estimates for eigenfunctions. (a) Given a sufficiently dense sequence of length scales $(x_i)_i$ for which we can ensure exponential decay of the eigenfunctions around origin, we have to ensure that this eigenfunction does not have another resonance at some distant $L$ (red vs. blue curve). This is achieved by excluding a small set of fields (going to zero with $x_i$ going to infinity).
(b) The so-called paving property then allows to lift the exponential decay of the eigenfunction on interval of length $x_i$ included in the interval $[x_i^\capconst,x_i^{2\capconst}]$ for some $\capconst>0$ to this larger interval (with slightly worse decay rate).}
\end{figure}
Let us quickly describe the steps used in the proof of the second statement of Theorem \ref{thm:loclEQW} displayed in Figure \ref{fig:proof_schema}. In general, we follow a method established by Bourgain and Goldstein in \cite{bourgain_goldstein}. The basic idea is to take a polynomially bounded generalized eigenfunction whose existence for all spectral points $z\in\sigma(\We)$ is guaranteed by the transfer matrices and show that it decays exponentially. The main steps in the proof are the following:
\begin{enumerate}
  \item fix some generalized eigenvalue $z\in\torus$, whose corresponding eigenfunction is polynomially bounded by Lemma \ref{lem:gen_eig},
  \item show that for a sufficiently dense sequence $x_i\to\infty$ the corresponding resolvent of the finite restriction of $\We$ to $[-x_i,x_i]$ grows exponentially with the size of the interval,
  \item for each $x_i$, show that up to a small set of fields $\EF$ with measure going to zero eventually, all finite resolvents on intervals of length $x_i$ contained in $[x_i^\capconst,x_i^{2\capconst}]$ for some $\capconst>0$ are exponentially decaying,
  \item it follows from the resolvent identity that the finite resolvent on $[x_i^\capconst,x_i^{2\capconst}]$ is exponentially decaying, which finishes the proof since we can cover all of $\integers$ with such intervals.
\end{enumerate}
The set of fields which has to be excluded to avoid the double resonances between the initial interval and the intervals in $[x_i^\capconst,x_i^{2\capconst}]$ is of measure zero and corresponds to the Diophantine fields which have to be excluded in the theorem above.

\subsection{Implications for related models}

\subsubsection{Electric quantum walks as CMV matrices}

In the literature there are several models related to the shift-coin quantum walks defined in \eqref{eq:defShCoin}. Interesting from a mathematical perspective are CMV matrices which are unitary five-diagonal matrices related to the orthogonal polynomials on the unit circle in the same sense as Jacobi matrices are related to orthogonal polynomials on the real line \cite{CMV,SimonOPUC_books}. Their doubly infinite, so-called ``extended'' variant is defined by an infinite sequence of unitary $2\times2$ building blocks
\begin{equation}\label{eq:CMV_building_block}
  \Theta_x=\begin{pmatrix}\alpha_x^*    &   \rho_x  \\  \rho_x  &   -\alpha_x\end{pmatrix},\qquad\rho_x=\sqrt{1-\abs{\alpha_x}^2},
\end{equation}
where the $\alpha_x\in\mathbb D$ are called \emph{Verblunsky coefficients}. These building blocks define a general CMV by
\begin{equation}\label{eq:CMV}
  \mathcal E=\Big(\bigoplus_{x\text{ even}}\Theta_x\Big)\Big(\bigoplus_{x\text{ odd}}\Theta_x\Big).
\end{equation}

Recently, Anderson localization was proved for CMV matrices with analytic quasi-periodic Verblunsky coefficients in \cite{wang_damanik} employing the methods from \cite{bourgain_goldstein}. Using the unitary equivalence between CMV matrices and quantum walks established in \cite{cantero2010matrix,CGMV}, it was claimed in \cite{wang_damanik} that this result carries over directly to \qws with analytic quasi-periodic coins. Yet, the proof in \cite{wang_damanik} is not directly applicable to electric quantum walks because the quasi-periodicity of the coins does not necessarily imply the quasi-periodicity of the corresponding CMV matrix:
\begin{lemma}
  The Verblunsky coefficients of the electric walk \eqref{eq:EQW} are given by
  \begin{equation*}
    \alpha_{2x}=-ce^{-i(x^2\EF+x(\arg a+\arg d)+2x\theta+\arg a)},\qquad\alpha_{2x+1}=0,
  \end{equation*}
  where $a,b,c,d$ are the entries of the constant coin $C_0$.
\end{lemma}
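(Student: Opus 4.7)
The plan is to realize the stated unitary equivalence between the electric walk $\We$ and an extended CMV matrix following \cite{cantero2010matrix,CGMV}, and to carefully track the phase factors that must be absorbed to bring the coin blocks into CMV canonical form. Starting from $\We=CS$ with site-dependent coin $C_x=e^{i(x\EF+\theta)}C_0$ and $C_0=\begin{pmatrix}a&b\\c&d\end{pmatrix}$, the CMV reordering of the basis realizes the conditional shift $S$ as the factor $\bigoplus_x\Theta_{2x+1}$ consisting of pure swap matrices, which immediately yields $\alpha_{2x+1}=0$.

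To obtain the nontrivial $\alpha_{2x}$ I would bring each coin block $C_x$ into the canonical form $\Theta_{2x}=\begin{pmatrix}\alpha_{2x}^*&\rho_{2x}\\\rho_{2x}&-\alpha_{2x}\end{pmatrix}$ by conjugating on both sides with diagonal unitaries $D^L_x,D^R_x$. Imposing that $D^R_x$ at site $x$ cancel $D^L_{x+1}$ at site $x+1$ when multiplied through the intervening shift, so that the per-site conjugations glue into a single global change of basis, yields first-order recurrences for the accumulated phases $\mu_x,\nu_x$ whose forcing terms are linear in $x$: schematically $\mu_{x+1}-\mu_x\sim x\EF+\theta+\arg a$, and an analogous one for $\nu_x$ with $\arg d$ in place of $\arg a$. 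Summing these linear-in-$x$ increments produces the advertised quadratic phase $x^2\EF$ together with the linear contributions $2x\theta$ and $x(\arg a+\arg d)$ and the constant offset $\arg a$ coming from the initial condition at $x=0$; the coefficient of the quadratic term is $\EF$ rather than $\EF/2$ because both diagonal entries of each coin carry the same linear-in-$x$ field phase, so both recurrences feed into the quadratic accumulation.

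Reading off the $(2,1)$ entry, which is required to equal $\rho_{2x}>0$, fixes the remaining phase freedom, and the corresponding $(2,2)$ entry then gives $-\alpha_{2x}$ with amplitude $|c|$ from the off-diagonal of $C_0$ and phase exactly the accumulated exponent above, matching the stated formula; the identities $\rho_{2x}=|a|$ and $|\alpha_{2x}|=|c|$ are automatic from $|a|^2+|b|^2=1$. The main obstacle is purely bookkeeping: pinning down sign and phase conventions in the CMV conversion and the initial conditions of the phase recurrences so that the offset $\arg a$ (and not $\arg d$) appears and the sign $-c$ comes out correctly in front.
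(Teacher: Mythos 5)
Your proposal follows essentially the same route as the paper: a diagonal gauge transformation (your glued $D^L_x,D^R_x$, the paper's explicit $\Lambda$ with phases $\mp\bigl(\tfrac x2(x\mp1)\EF+x(\arg a+\theta)\bigr)$) that makes the diagonal coin entries real, with the quadratic phase $x^2\EF$ arising exactly as you say from telescoping the linear-in-$x$ increments contributed by both diagonal entries, and $\alpha_{2x+1}=0$ coming from the shift factor. The paper simply states the resulting $\Lambda$ and reads off $\alpha_{2x}$ from $\Lambda^*\We\Lambda$, so your derivation-by-recurrence is the same proof with the bookkeeping made explicit.
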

\begin{proof}
  To recover the CMV matrix corresponding to a walk of the form \eqref{eq:defShCoin} one needs to find a base change which makes the $a_x$ and the $d_x$ real \cite{cantero2010matrix,CGMV}. For electric walks \eqref{eq:EQW}, this base change is implemented by the diagonal matrix $\Lambda$ defined by
  \begin{align*}
    \Lambda\delta_{2x}  & =e^{-i(\frac x2(x-1)\EF+x(\arg a+\theta))}\delta_{2x}, \\
    \Lambda\delta_{2x+1}    &=e^{i(\frac x2(x+1)\EF+x(\arg a+\theta))}\delta_{2x+1}.
  \end{align*}
  The Verblunsky coefficients can be read off from $\mathcal E=\Lambda^*\We\Lambda$.
\end{proof}
By this lemma, the Verblunsky coefficients of electric walks are not quasi-periodic, i.e. their $x$-dependence is not determined by the shift $\tau_\EF(\theta)=\EF+\theta$ but rather by the skew-shift $(x,y) \mapsto (x + y, y + 2\EF)$ applied to the initial vector $(x_0,y_0)=(\arg a, \arg a + \arg d + \EF + 2\theta )$. We emphasize that the methods of \cite{bourgain_goldstein} and in particular the derivation of the large deviation estimate do not directly apply for skew-shift models. One reason for this is that the growth rate of the norm of products of the transfer matrix might not be uniformly bounded in the length of the product which requires the use of different methods like e.g. the avalanche principle \cite{goldstein_schlag,bgs_skew}. While we believe that the methods established for skew-shift models in Hamiltionan systems (see, e.g., \cite{bgs_skew,kruger2012skew,tao_skew,bourgain_book}) possibly carry over to skew-shift CMV matrices, we here use a more direct approach via the transfer matrices \eqref{eq:transMatEF} to prove localization for electric quantum walks.

\subsubsection{Anderson localization for skew-shift CMV matrices}

Yet, Theorem \ref{thm:loclEQW} implies localization for ``full'' CMV matrices with skew-shift Verblunsky coefficients
\begin{equation}\label{eq:electric_verblunskys}
  \tilde\alpha_{x}=\lambda e^{-i(x^2\EF+x(\theta+\xi)+\zeta)},\qquad \abs\lambda<1,\;x\in\integers,
\end{equation}
without the restriction that every second Verblunsky coefficient vanishes. This supplements the result in \cite{krueger_skew_shift} where pure point spectrum for half-line CMV matrices is shown for almost all $\zeta$ and $\theta=0=\xi$. In \eqref{eq:electric_verblunskys} $\xi$ is a fixed phase factor while, as above, $\theta$ denotes the random parameter.
This follows directly from a technique known as ``sieving'':
\begin{lemma}\label{lem:sieving}
  Let $\mathcal E$ be a CMV matrix whose even/odd Verblunsky coefficients vanish, e.g. a CMV matrix derived from a shift-coin quantum walk. Then
  \begin{equation*}
    \mathcal E^2=\tilde{\mathcal E}\oplus\tilde{\mathcal E}^{\,T},
  \end{equation*}
  where $\tilde{\mathcal E}$ is a CMV matrix with Verblunsky coefficients $\tilde\alpha_x=\alpha_{2x+1}$.
\end{lemma}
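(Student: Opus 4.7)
The plan is to carry out the standard sieving computation based on the factorization $\mathcal{E} = \mathcal{L}\mathcal{M}$ with $\mathcal{L} = \bigoplus_{x\text{ even}}\Theta_x$ and $\mathcal{M} = \bigoplus_{x\text{ odd}}\Theta_x$. Assume without loss of generality that $\alpha_{2x}=0$ for all $x$, the other case being completely symmetric. Then by \eqref{eq:CMV_building_block} each $\Theta_{2x}$ reduces to the swap matrix $\bigl(\begin{smallmatrix}0 & 1 \\ 1 & 0\end{smallmatrix}\bigr)$, so $\mathcal{L}$ becomes the involutive permutation $P$ of $\ell^2(\integers)$ interchanging $\delta_{2x}\leftrightarrow\delta_{2x+1}$. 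In particular $\mathcal{E}^2 = P\mathcal{M}P\mathcal{M}$ depends only on the coefficients $\alpha_{2x+1}$, which is already a good sign that the claimed identification $\tilde{\alpha}_k = \alpha_{2k+1}$ will come out correctly.

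The first concrete step is to find invariant subspaces for $\mathcal{E}^2$. A direct calculation from \eqref{eq:CMV_building_block} shows that $\mathcal{E}$ sends $\mathrm{span}(\delta_{2x+1},\delta_{2x+2})$ into $\mathrm{span}(\delta_{2x},\delta_{2x+3})$, and tracking indices modulo $4$ through one further application of $\mathcal{E}$ then shows that
\begin{equation*}
  V_1 = \overline{\mathrm{span}}\{\delta_n : n \equiv 1,2 \pmod 4\}, \qquad V_2 = \overline{\mathrm{span}}\{\delta_n : n \equiv 0,3 \pmod 4\}
\end{equation*}
are each invariant under $\mathcal{E}^2$ and together span $\ell^2(\integers)$. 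I would then reindex these subspaces by $\tilde{\delta}_{2k} = \delta_{4k+1}$, $\tilde{\delta}_{2k+1} = \delta_{4k+2}$ on $V_1$, and by $\hat{\delta}_{2k} = \delta_{4k}$, $\hat{\delta}_{2k+1} = \delta_{4k+3}$ on $V_2$.

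What remains is to expand $\mathcal{E}^2\tilde{\delta}_j$ and $\mathcal{E}^2\hat{\delta}_j$ in the respective bases and compare with the factorizations built from the reduced Verblunsky coefficients $\tilde{\alpha}_k = \alpha_{2k+1}$, $\tilde{\rho}_k = \rho_{2k+1}$. A direct calculation yields $\mathcal{E}^2|_{V_2} = \tilde{\mathcal{L}}\tilde{\mathcal{M}} = \tilde{\mathcal{E}}$ and $\mathcal{E}^2|_{V_1} = \tilde{\mathcal{M}}\tilde{\mathcal{L}}$. The transpose in the claim then drops out for free because each building block \eqref{eq:CMV_building_block} carries the same $\tilde{\rho}_x$ in both off-diagonal positions, i.e. $\tilde{\Theta}_x$ is symmetric, so $\tilde{\mathcal{M}}\tilde{\mathcal{L}} = (\tilde{\mathcal{L}}\tilde{\mathcal{M}})^T = \tilde{\mathcal{E}}^T$.

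The main obstacle is purely combinatorial rather than conceptual: one has to choose the reindexings and the order of the two factors consistently so that one invariant subspace yields $\tilde{\mathcal{E}}$ and the other $\tilde{\mathcal{E}}^T$ rather than two copies of the same operator. Once the invariant decomposition is identified and the alternating $\mathcal{L}/\mathcal{M}$ structure is respected on each piece, matching the coefficients $\alpha_{2k+1}^*$, $\alpha_{2k+1}$, and $\rho_{2k+1}$ term by term is immediate.
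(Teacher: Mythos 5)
Your argument is correct: the paper itself gives no proof of this lemma (it only remarks that the identity ``can be verified by hand'' and defers the detailed derivation to the cited reference), and your computation --- writing $\mathcal E=\mathcal L\mathcal M$ with $\mathcal L$ a pure swap, identifying the two $\mathcal E^2$-invariant subspaces indexed by residues $\{1,2\}$ and $\{0,3\}$ mod $4$, and matching one restriction with $\tilde{\mathcal L}\tilde{\mathcal M}$ and the other with $\tilde{\mathcal M}\tilde{\mathcal L}=(\tilde{\mathcal L}\tilde{\mathcal M})^T$ via the symmetry of the blocks $\tilde\Theta_x$ --- is exactly that standard verification, and I have checked that the coefficient matching works out. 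The only point worth flagging is the even/odd convention: the formula $\tilde\alpha_x=\alpha_{2x+1}$ presumes the \emph{even} coefficients vanish (your WLOG), whereas the walk-derived CMV matrices in this paper have $\alpha_{2x+1}=0$, so in that application the roles are interchanged; this is a relabeling, not a gap.
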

This statement can be verified by hand. For a detailed derivation we refer the reader to \cite{bucaj2019localization}. Combining Lemma \ref{lem:sieving} with Theorem \ref{thm:loclEQW} we infer:
\begin{corollary}
  Let $\mathcal E$ be a CMV matrix with Verblunsky coefficients given in \eqref{eq:electric_verblunskys}. Then for almost all $\EF$ and all $\theta$ the spectrum of $\mathcal E$ is pure point with exponentially decaying eigenfunctions.
\end{corollary}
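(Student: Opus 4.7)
The plan is to reverse-engineer Lemma~\ref{lem:sieving}: realize the target CMV matrix $\mathcal E$ as a direct summand of the square of a sparse CMV matrix associated to an electric quantum walk, and then pull Anderson localization back through Theorem~\ref{thm:loclEQW}. The statement is a fairly direct corollary of the two ingredients already established, so the bulk of the work is bookkeeping.

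First I would match parameters. The CMV matrix of an electric walk has non-vanishing Verblunsky coefficients only at even indices, of the form $\alpha_{2x}=-c\,e^{-i(x^2\EF+x(\arg a+\arg d)+2x\theta_w+\arg a)}$, where $\theta_w$ is the walk's offset. Given the target $\tilde\alpha_x$ in \eqref{eq:electric_verblunskys} with $|\lambda|<1$, I pick a translation-invariant coin $C_0\in SU(2)$ with $|c|=|\lambda|$; since $|\lambda|<1$ this forces $|a|>0$, matching the hypothesis of Theorem~\ref{thm:loclEQW}. The constant phases ($\arg c$ and $\arg a$) can then be absorbed into $\zeta$, and for each value of the random parameter $\theta$ I set $\theta_w=(\theta+\xi-\arg a-\arg d)/2$, which lines up the linear-in-$x$ phase. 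Since $\theta\mapsto\theta_w$ is a (measure-preserving) bijection of $\torus$, the statement ``for all $\theta$'' in the corollary is equivalent to ``for all $\theta_w$''.

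Next I would apply the sieving identity. Since the electric-walk CMV has $\alpha_{2x+1}=0$ rather than $\alpha_{2x}=0$, one uses the symmetric version of Lemma~\ref{lem:sieving}, so that the induced ``full'' CMV matrix has Verblunsky coefficients $\tilde\alpha_x=\alpha_{2x}$ (possibly up to a harmless index shift absorbed into the phase). With the parameters chosen above this yields $\mathcal E_{\text{sparse}}^2=\tilde{\mathcal E}\oplus\tilde{\mathcal E}^{\,T}$, with $\tilde{\mathcal E}$ the CMV matrix from the statement. Theorem~\ref{thm:loclEQW} gives Anderson localization for $\We$ for almost all $\EF$ and all $\theta_w$, and via the diagonal unitary $\Lambda$ from the proof of the preceding lemma this transfers to $\mathcal E_{\text{sparse}}$.

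Finally, I would argue that Anderson localization is preserved both by squaring and by passing to direct summands. For squaring, since $U^2-z^2=(U-z)(U+z)$, every eigenfunction of $U^2$ is a linear combination of eigenfunctions of $U$ (with eigenvalues $\pm z$), so pure point spectrum with exponentially decaying eigenfunctions is equivalent for $U$ and $U^2$; the decay rate is preserved. For direct sums, eigenfunctions split block-wise and each summand inherits the property. Combining these two observations with the preceding paragraph yields Anderson localization for $\tilde{\mathcal E}$ for almost all $\EF$ and all $\theta$. The only real obstacle is the indexing: since Lemma~\ref{lem:sieving} is stated with the convention $\tilde\alpha_x=\alpha_{2x+1}$ opposite to the electric walk's, one must be careful to match the right Verblunsky pattern and track the several layers of phase redefinitions, but none of this is conceptually deep.
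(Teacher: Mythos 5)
Your proposal is correct and follows essentially the same route as the paper, which likewise obtains the corollary by combining the sieving identity of Lemma~\ref{lem:sieving} with Theorem~\ref{thm:loclEQW}; your write-up merely makes explicit the parameter matching, the unitary $\Lambda$, and the (easy) stability of Anderson localization under squaring and under passing to direct summands, all of which the paper leaves implicit. The only caveat, shared with the paper's own statement, is that the case $\lambda=0$ must be excluded, since $|c|=|\lambda|=0$ gives $|a|=1$ and violates the hypothesis of Theorem~\ref{thm:loclEQW} (and indeed yields the free CMV matrix, which is not localized).
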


\subsubsection{Anderson localization for quasi-periodic unitary band matrices}

Another model intimately related to quantum walks is the unitary band matrices introduced in \cite{AlainUnitaryBandMats} and afterwards studied e.g. in \cite{Joye2004,Hamza2009,deOliveira2007}. They are a generalization of CMV matrices in the sense that they have the form \eqref{eq:CMV} but its building blocks are arbitrary unitary $2\times2$ matrices instead of having the symmetric form in \eqref{eq:CMV_building_block}. Therefore, unitary band matrices have additional phases representing the determinants of the building blocks. Similarly to the lemma above, for these unitary band matrices one can show the following:
\begin{lemma}
  Let $W$ be a shift-coin quantum walk as in \eqref{eq:defShCoin} with coin $C=\bigoplus_xC_x$. Then
  \begin{equation*}
    W^2=U\oplus \tilde U,
  \end{equation*}
  where $U=(\bigoplus_{x\textnormal{ even}}C_x)(\bigoplus_{x\textnormal{ odd}}C_x)$ is a unitary band matrix and $\tilde U=(\bigoplus_{x\textnormal{ odd}}C_x)U(\bigoplus_{x\textnormal{ odd}}C_x)^*$.
\end{lemma}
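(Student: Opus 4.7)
The plan is to mirror the proof of the companion sieving identity (Lemma~\ref{lem:sieving}), now allowing arbitrary $2\times 2$ unitary blocks rather than the symmetric Verblunsky blocks of a CMV matrix. The starting observation is that the shift $S$, which acts as $\delta_x\otimes e_\pm\mapsto\delta_{x\pm 1}\otimes e_\pm$, exchanges the even-site subspace $\HH_e\subset\HH$ with the odd-site subspace $\HH_o$; since the coin $C$ is position-diagonal, the same is true for $W=CS$. Consequently $W^2$ leaves each of $\HH_e,\HH_o$ invariant and decomposes orthogonally as $W^2 = (W^2|_{\HH_e})\oplus (W^2|_{\HH_o})$, which will provide the two direct summands $U\oplus\tilde U$ claimed in the lemma.

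The heart of the argument is to identify $W^2|_{\HH_e}$ with the explicit band matrix $U$. To do this I would fix a unitary $\phi_e\colon\HH_e\to\ell_2(\integers)$ whose adjacent pairs of basis vectors encode, alternately, the two internal coin states at a common even site $2k$ (on which $C_{2k}$ acts), and the two ``bond'' states $\delta_{2k}\otimes e_+$ and $\delta_{2k+2}\otimes e_-$ which, after one application of $S$, coincide at the odd site $2k+1$ where $C_{2k+1}$ acts. Transporting the factorisation $W^2|_{\HH_e}=C^e\circ(S|_{\HH_o\to\HH_e})\circ C^o\circ(S|_{\HH_e\to\HH_o})$ through $\phi_e$ should then express $W^2|_{\HH_e}$ as a product of two block-diagonal unitaries on $\ell_2(\integers)$: one whose blocks carry the even-site coins $C_{2k}$ at positions $(2k,2k+1)$, the other whose blocks carry the odd-site coins $C_{2k+1}$ at positions $(2k+1,2k+2)$. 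These two factors are precisely $\bigoplus_{x\text{ even}}C_x$ and $\bigoplus_{x\text{ odd}}C_x$, yielding $\phi_e\,W^2|_{\HH_e}\,\phi_e^*=U$.

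Running the analogous construction on $\HH_o$ yields the same two factors in the reversed order, i.e.\ $W^2|_{\HH_o}\cong ML$ with the notation $L=\bigoplus_{x\text{ even}}C_x$ and $M=\bigoplus_{x\text{ odd}}C_x$. Since $ML = M(LM)M^{-1}$ and $M$ is unitary, this rewrites as $MUM^*$, giving the stated conjugation relation $\tilde U = (\bigoplus_{x\text{ odd}}C_x)\,U\,(\bigoplus_{x\text{ odd}}C_x)^*$.

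The main obstacle I anticipate is the identification step: one must commit to a specific ordering of basis vectors within each $2\times 2$ block so that the shift-sandwiched operator $(S|_{\HH_o\to\HH_e})\,C^o\,(S|_{\HH_e\to\HH_o})$ becomes exactly $\bigoplus_{x\text{ odd}}C_x$ and not a basis-permuted variant of it. Once this bookkeeping is fixed, the matrix elements of $W^2|_{\HH_e}$, read off by tracking the four two-step walker paths $e_\pm\mapsto C_{2k+1}e_\pm\mapsto\text{shift}\mapsto C_y e_\pm$ with $y\in\{2k,2k+2\}$, coincide term by term with those of $LM$ by direct inspection.
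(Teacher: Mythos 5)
Your decomposition $W^2=(W^2|_{\HH_e})\oplus(W^2|_{\HH_o})$ via the parity-swapping property of $S$, the identification of each summand with an alternating product of even-coin and odd-coin $2\times2$ blocks, and the conjugation identity $ML=M(LM)M^{*}$ are exactly the direct verification the paper has in mind (it states the lemma without proof). The one point to watch in the bookkeeping step you yourself flag is that the middle factor $S\,C^{o}\,S|_{\HH_e}$ sends the pair $\{\delta_{2k}\otimes e_+,\delta_{2k+2}\otimes e_-\}$ to the \emph{different} pair $\{\delta_{2k+2}\otimes e_+,\delta_{2k}\otimes e_-\}$, so it is not itself block-diagonalizable by relabelling; only the full product factors as (block-diagonal)$\times$(block-diagonal), and with the natural interleaved labelling the blocks come out as $\sigma_1 C_x$ rather than $C_x$ --- harmless for the corollary, since $\sigma_1 C_0e^{i(\EF x+\theta)}$ is again a building block of the same quasi-periodic form, but worth stating explicitly rather than leaving to ``direct inspection''.
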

This lemma provides a direct correspondence between the squares of shift-coin quantum walks and unitary band matrices. In one of the models studied in \cite{AlainUnitaryBandMats} the determinants of the building blocks are given by quasi-periodic functions. For Liouville frequencies, the authors prove purely singular continuous spectrum for almost all offsets. Using the above correspondence between unitary band matrices and shift-coin quantum walks, Theorem \ref{thm:loclEQW} completes this picture:

\begin{corollary}
  Let $U=(\bigoplus_{x\textnormal{ even}}C_x)(\bigoplus_{x\textnormal{ odd}}C_x)$ with building blocks $C_x=C_0e^{i(\EF x+\theta)}$. Then for almost all frequencies $\EF$ and all offsets $\theta$, $U$ exhibits Anderson localization.
\end{corollary}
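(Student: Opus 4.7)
The strategy is to reduce the claim directly to Theorem \ref{thm:loclEQW} via the squared-walk decomposition furnished by the preceding lemma. With the coin pattern $C_x = C_0 e^{i(\EF x + \theta)}$, the shift-coin walk $W = CS$ is precisely the electric walk $\We$ of \eqref{eq:EwalkII}, so the preceding lemma yields
\begin{equation*}
  \We^2 \;=\; U \oplus \tilde U \qquad \text{on } \HH_U \oplus \HH_{\tilde U} \;=\; \ell_2(\integers)\otimes\complex^2,
\end{equation*}
where $U$ is the band matrix in the statement and $\tilde U = \bigl(\bigoplus_{x \text{ odd}} C_x\bigr)\, U\, \bigl(\bigoplus_{x \text{ odd}} C_x\bigr)^*$ is unitarily equivalent to $U$ through a local diagonal conjugation that manifestly preserves both pure-point character and exponential decay rates.

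For any $(\EF,\theta)$ to which Theorem \ref{thm:loclEQW}(ii) applies, $\We$ exhibits Anderson localization; since every eigenfunction of $\We$ is an eigenfunction of $\We^2$ with eigenvalue squared, so does $\We^2$. To push this property from the orthogonal sum onto the summand $U$, I would pick a complete orthonormal basis $\{\psi_n\}$ of exponentially decaying eigenfunctions of $\We^2$ with eigenvalues $\lambda_n$, and let $P_U$ denote the orthogonal projection onto $\HH_U$. Each $P_U \psi_n$ then satisfies $U(P_U \psi_n) = \lambda_n P_U \psi_n$ and inherits the exponential decay. Moreover, any vector in $\HH_U$ orthogonal to all $P_U \psi_n$ lifts to a vector in $\ell_2(\integers)\otimes\complex^2$ orthogonal to the complete basis $\{\psi_n\}$, hence vanishes. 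Thus $\{P_U \psi_n\}$ is total in $\HH_U$, and $U$ enjoys a complete family of exponentially decaying eigenfunctions, i.e.\ Anderson localization.

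The whole argument is bookkeeping and no substantive obstacle is expected. The only items to verify are that the coin pattern in the corollary matches \eqref{eq:EwalkII} verbatim, that the diagonal conjugation relating $\tilde U$ and $U$ really transports pure-point spectrum and decay rate, and that the nondegeneracy assumption $\abs{a}<1$ is inherited implicitly from Theorem \ref{thm:loclEQW} — all immediate. The degenerate case $\abs{a}=1$ is excluded by the remark following the main theorem, where $\We$ is block-diagonal and localization is trivial.
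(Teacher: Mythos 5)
Your reduction is exactly the one the paper intends: the coin pattern reproduces $\We$ from \eqref{eq:EwalkII}, the preceding lemma gives $\We^2 = U\oplus\tilde U$ with respect to a sublattice decomposition, and projecting a complete set of exponentially decaying eigenfunctions of $\We$ (hence of $\We^2$) onto the invariant summand $\HH_U$ yields a total family of exponentially decaying eigenfunctions of $U$, which is all the corollary asserts. One correction to your closing remark: the degenerate case handled in the remark after Theorem \ref{thm:loclEQW} is $a=0$ (completely off-diagonal coin, reflecting walls at every site, trivially localized), not $\abs{a}=1$; for $\abs{a}=1$ the coin is diagonal, the walk is a pure shift up to phases with absolutely continuous spectrum and ballistic transport, so localization \emph{fails} there and that case must simply be excluded by the hypothesis $\abs{a}<1$ rather than dismissed as trivial.
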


\section{The absence of absolutely continuous spectrum}
\label{sec:no_ac}
In this section we prove the first part of Theorem \ref{thm:loclEQW}, i.e. we show that electric walks with irrational fields cannot have any absolutely continuous spectrum. To this end, let $(\Omega,\mathcal A,\mu,\tau)$ be an ergodic dynamical system and let $A:\Omega\to GL(d,\complex)$ be a random variable inducing an ergodic integrable cocycle $A_n(\omega):=A(\tau^n\omega)A(\tau^{n-1}\omega)\cdots A(\omega)$ with Lyapunov exponent
\begin{equation}\label{eq:lyap_exp}
  \lyap(A)=\lim_{n\to\infty}\lyap_n(A):=\lim_{n\to\infty}\frac1n\int_\Omega\log\norm{A_n(\omega)}\:\mu(d\omega).
\end{equation}
This limit exists by Kingman's subadditive theorem and is equivalent to $\lim_{n\to\infty}\log\norm{A_n(\omega)}/n$ for almost every $\omega$ \cite{cyconschroeder}. We are almost exclusively interested in quasi-periodic cocycles for which $\Omega=\torus$ and the ergodic map is given by the shift $\tau_\EF(\theta)=\EF+\theta$.

The random variables $A$ we consider correspond to transfer matrices of unitary band matrices and, in particular, of quantum walks and depend additional on the generalized eigenvalue $z\in\complex$. For the corresponding cocycles a variant of the Ishii-Pastur Theorem for unitary band operators allows to characterize the absolutely continuous part of their spectrum by the Lyapunov exponent $\lyap(A(z))\equiv\lyap(z)$ \cite{cyconschroeder,AlainUnitaryBandMats,SimonOPUC_books}. Adapted to quantum walks of the form \eqref{eq:defShCoin} with random coins we have:
\begin{theorem}\label{thm:ishiipastur}
  Let $W_\omega$ be a shift-coin walk with ergodic coin $\omega\mapsto C_\omega$. Then, for $\mu$-almost every $\omega$,
  \begin{equation*}
    \sigma_{ac}(W_\omega)\subseteq\{z\in\torus:\lyap(z)=0\}.
  \end{equation*}
\end{theorem}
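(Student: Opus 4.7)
The plan is to adapt the classical Ishii-Pastur argument to the unitary setting by combining the Oseledec multiplicative ergodic theorem with the OPUC version of the Gilbert-Pearson subordinacy theorem, both of which have been worked out for CMV matrices in \cite{SimonOPUC_books} and more generally for unitary band matrices in \cite{AlainUnitaryBandMats}. The argument parallels the self-adjoint Ishii-Pastur proof recalled in \cite{cyconschroeder}, with the Schr\"odinger transfer matrices replaced by those in (\ref{eq:transfMat}) and subordinacy theory replaced by its unitary counterpart.

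First I would fix $z \in \torus$ with $\lyap(z) > 0$. Since the coins $C_\omega$ are uniformly bounded and, under the standing assumption, $\abs{a_\omega}$ is bounded away from zero, the cocycle $A(\omega,z)$ defined through (\ref{eq:transfMat}) takes values in $GL(2,\complex)$ with $\log\norm{A^{\pm 1}(\cdot,z)}\in L^1(\mu)$. Oseledec then produces, for $\mu$-almost every $\omega$, a one-dimensional stable subspace $E^s(\omega,z)\subset\complex^2$ along which the forward cocycle contracts at rate $-\lyap(z)$. Translating back through the correspondence between transfer-matrix iterates and solutions of $W_\omega\psi = z\psi$, this yields a nontrivial solution decaying exponentially at $+\infty$; the analogous argument applied to the inverse cocycle gives a solution decaying exponentially at $-\infty$. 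An exponentially decaying solution is in particular subordinate, so the OPUC subordinacy theorem excludes $z$ from the essential support of the absolutely continuous part of each half-line spectral measure, and the standard rank-two coupling reduction from whole-line to half-line operators extends the conclusion to $W_\omega$ itself. A Fubini argument in $(\omega,z)$ finally upgrades this pointwise-in-$z$ statement to the claimed inclusion $\sigma_{ac}(W_\omega)\subseteq\{z:\lyap(z)=0\}$ valid for $\mu$-a.e.\ $\omega$.

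The main obstacle I anticipate is organizational rather than conceptual. The Oseledec exceptional null set depends on $z$, and the order of quantifiers must be swapped carefully; the standard workaround is to integrate over $z$ against Lebesgue measure on $\torus$ and use that the a.c.\ spectral measure is mutually absolutely continuous with arc-length on its essential support. Additionally, to quote the OPUC subordinacy theorem verbatim, one first conjugates the transfer matrices in (\ref{eq:transfMat}) into canonical $SU(1,1)$ form by a diagonal change of basis; this conjugation leaves $\lyap(z)$ and the notion of subordinate solution invariant, so it is a purely notational step. With these technical points taken care of, the theorem follows.
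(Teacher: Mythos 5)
Your sketch is essentially the argument the paper has in mind: the paper gives no proof of Theorem \ref{thm:ishiipastur} at all, importing it from \cite{cyconschroeder,AlainUnitaryBandMats,SimonOPUC_books} as the unitary/CMV transplant of the classical Ishii--Pastur theorem, and that transplant is exactly the route you describe --- Oseledec applied to the unimodularized cocycle (the $\mathbb{SU}(1,1)$ normalization the paper itself sets up in Lemma \ref{lem:SU11}), a Fubini swap of the $z$- and $\omega$-null sets, and the exclusion of $\{z:\lyap(z)>0\}$ from the essential support of the a.c.\ part via subordinacy (or, equally well, via the Sch'nol-type Lemma \ref{lem:gen_eig} plus countability of eigenvalues). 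Your proposal is correct and matches the intended proof.
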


Thus, the first statement in Theorem \ref{thm:loclEQW} is a direct consequence of the following result:

\begin{proposition}\label{thm:poslyap}
  Let $W_\omega$ be a shift-coin \qw \eqref{eq:defShCoin} with the coin defined as a random function on an ergodic dynamical system $(\Omega,\mathcal A,\mu,\tau)$. Then, denoting by $dz$ the Lebesgue measure on $\torus$, the Lyapunov exponent $\lyap(z)$ of the corresponding cocycle satisfies
  \begin{equation*}
    \int_\torus\lyap(z)\:dz=\int_\Omega\log\frac1{\abs{a_\omega}}\:d\mu(\omega).
  \end{equation*}
  In particular, for electric walks $\We$ with $\EF/(2\pi)$ irrational we have
  \begin{equation*}\label{eq:pos_lyap}
    \lyap=\log\frac1{\abs a}.
  \end{equation*}
\end{proposition}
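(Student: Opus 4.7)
The plan is to derive the integrated identity via a subharmonic-function argument of Thouless type, and then to read off the explicit value of $\lyap$ for electric walks from a combined shift invariance of the cocycle.

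First I would factor out the pole of the transfer matrix at $z=0$: writing $\trans_x(z) = (a_x z)^{-1} P_x(z)$ with the matrix polynomial
\begin{equation*}
P_x(z)\,:=\,\begin{pmatrix} \det(C_x) & z\,c_x \\ -z\,b_x & z^2 \end{pmatrix}
\end{equation*}
of degree two in $z$, iteration gives $A_n(\omega,z) = z^{-n}\bigl(\prod_{k=0}^{n-1}a_{\tau^k\omega}\bigr)^{-1} R_n(\omega,z)$ with $R_n := \prod_{k=0}^{n-1} P_{\tau^k\omega}$ a matrix polynomial in $z$ of degree $2n$. Since $|z^{-n}|=1$ on the unit circle,
\begin{equation*}
\log\norm{A_n(\omega,z)} \,=\, -\sum_{k=0}^{n-1}\log|a_{\tau^k\omega}| \,+\, \log\norm{R_n(\omega,z)}\,,
\end{equation*}
and Birkhoff's ergodic theorem together with Kingman's subadditive theorem then yield
\begin{equation*}
\lyap(z) \,=\, -\int_\Omega \log|a_\omega|\,d\mu(\omega) \,+\, \lyap_R(z), \qquad \lyap_R(z) \,:=\, \lim_{n\to\infty}\tfrac{1}{n}\log\norm{R_n(\omega,z)}\,.
\end{equation*}
The integrated identity thus reduces to showing $\int_\torus \lyap_R(z)\,dz = 0$.

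The core technical step is this final vanishing, which I would establish via subharmonicity. Because $R_n$ is holomorphic in $z$, $\log\norm{R_n(\omega,z)}$ is subharmonic on $\complex$; it satisfies $\log\norm{R_n(\omega,0)} = \log\,\bigl|\prod_{k=0}^{n-1}\det(C_{\tau^k\omega})\bigr| = 0$ by unitarity of each $C_x$, and $\log\norm{R_n(\omega,z)} \sim 2n\log|z|$ at infinity since the leading matrix coefficient $\diag(0,1)$ has unit operator norm. The sub-mean-value inequality applied at $z=0$ immediately gives the lower bound $\int_\torus \log\norm{R_n(\omega,z)}\,dz \geq 0$, hence $\int_\torus \lyap_R(z)\,dz \geq 0$. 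The matching upper bound is the Thouless-type identity classical for unitary band and CMV operators (see e.g.~\cite{SimonOPUC_books,AlainUnitaryBandMats}): writing the Riesz representation $\log\norm{R_n(\omega,z)} = \int\log|z-w|\,d\nu_n^\omega(w)$ with $\nu_n^\omega$ a positive measure of total mass $2n$, and using that the entries of $R_n$ encode the finite-volume eigenvalue problem $W\phi = z\phi$ whose eigenvalues lie on the unit circle, one shows that the mass of $\nu_n^\omega$ outside the closed unit disk is $o(n)$; Fubini combined with the Jensen identity $\int_\torus \log|z-w|\,dz = \log^+|w|$ then closes the inequality.

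For the electric walk, the transfer matrix depends on $\theta$ and $z$ only through the combined parameter $z^{-1}e^{i\theta}$, yielding $\trans(\theta+\delta, ze^{i\delta}) = \trans(\theta, z)$ and hence $\norm{A_n(\theta, z)} = \norm{A_n(0, ze^{-i\theta})}$. Consequently $\tfrac{1}{n}\int_\torus \log\norm{A_n(\theta, z)}\,d\theta = \tfrac{1}{n}\int_\torus \log\norm{A_n(0, w)}\,dw$ is independent of $z$; passing to the limit, $\lyap(z) \equiv \lyap$ is constant on $\torus$, so the integrated identity gives $\lyap = \log(1/|a|)$. The main obstacle is the matching upper bound in the subharmonic step, i.e.~verifying that the Riesz mass of $\nu_n^\omega$ outside $\torus$ is of sublinear order in $n$: this is the Thouless-type identity for the walk cocycle and is the most delicate piece of the argument.
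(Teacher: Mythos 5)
Your route is genuinely different from the paper's. You split $\log\norm{A_n(\omega,z)}$ on $\abs{z}=1$ into the Birkhoff sum $-\sum_k\log\abs{a_{\tau^k\omega}}$ plus $\log\norm{R_n(\omega,z)}$ and try to annihilate the circle average of the second term by subharmonicity; the paper instead conjugates the unimodular transfer matrices into $SL(2,\reals)$ via Lemma \ref{lem:SU11}, peels off the $z$-dependence as a rotation $R_2(\arg z)$, and applies the Herman--Avila--Bochi formula \eqref{eq:herman_avila_bochi}, after which everything reduces to the algebraic identity $\norm{A(\omega)}+\norm{A(\omega)}^{-1}=2/\abs{a_\omega}$. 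Your lower bound is correct and complete: $R_n(\omega,0)=\diag\bigl(\prod_k\det C_{\tau^k\omega},0\bigr)$ has unit norm, so the sub-mean-value inequality gives $\int_\torus\log\norm{R_n(\omega,z)}\,dz\geq 0$; this is exactly Herman's half of the argument. Your final step for electric walks --- constancy of $\lyap(z)$ from the dependence on the combined parameter $z^{-1}e^{i\theta}$ --- is also correct and matches the paper.

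The gap is the matching upper bound $\int_\torus\log\norm{R_n(\omega,z)}\,dz\leq o(n)$, which you assert by claiming that the Riesz mass of $\nu_n^\omega$ outside the closed unit disk is $o(n)$ ``because the finite-volume eigenvalues lie on the unit circle.'' This does not follow as stated: the Riesz measure of the subharmonic function $\log\norm{R_n(\omega,\cdot)}$ is not the zero-counting measure of any single polynomial. Only individual matrix entries of $R_n$ are polynomials, and only specific entries (in a suitable basis, with suitable boundary conditions) are characteristic polynomials of finite unitary truncations. To close your argument you would need (a) to identify such an entry and verify that its zeros are confined to the closed unit disk, and (b) to control $\log\norm{R_n}-\log\abs{(R_n)_{i_0j_0}}$ in $L^1(\torus,dz)$ up to $o(n)$; the pointwise bound $\max_{ij}\abs{(R_n)_{ij}}\leq\norm{R_n}\leq 2\max_{ij}\abs{(R_n)_{ij}}$ does not suffice because the maximizing entry varies with $z$ and the maximum does not commute with the integral. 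That is precisely the content of the Thouless formula for CMV and unitary band operators --- a theorem with its own nontrivial proof, not a formal consequence of Jensen's identity. Note also that this direction is the hard one in general: Herman's trick only yields the inequality $\geq$, and the equality for cocycles of the form $A(\omega)R_2(\theta)$ is the Avila--Bochi theorem. So either import the CMV Thouless formula as an explicit citable input (supplying step (b)), or follow the paper and invoke Herman--Avila--Bochi directly, which bypasses the Riesz-measure analysis entirely.
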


We prove Proposition \ref{thm:poslyap} by tracing back our case to that of $SL(2,\reals)$-valued cocycles via the following lemma \cite{SimonOPUC_books}:
\begin{lemma}\label{lem:SU11}
  Let $\mathbb{SU}(1,1)$ be the group of unimodular unitary $2\times2$ matrices satisfying
  \begin{equation*}
    A^*\pauli_3A=\pauli_3,
  \end{equation*}
  where $\pauli_3$ denotes the third Pauli matrix. Then
  \begin{equation*}
    Q^*\mathbb{SU}(1,1)Q=SL(2,\reals),
  \end{equation*}
  where $Q=-(1+i)^\inv\begin{pmatrix}1&-i\\1&i\end{pmatrix}$.
\end{lemma}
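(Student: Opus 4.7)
The plan is to reduce the defining conditions of $\mathbb{SU}(1,1)$, namely $\det A=1$ together with $A^*\pauli_3 A=\pauli_3$, to the defining conditions of $SL(2,\reals)$, namely $\det B=1$ together with $\overline B=B$, via the explicit conjugation $B=Q^*AQ$. Two short preliminary computations with the given $Q$ do all the work: first, $(1+i)(1-i)=2$ yields $Q^*Q=\idty$ and $\det Q=-1$, so conjugation by $Q$ is an automorphism of $M_2(\complex)$ preserving the determinant (in particular, unimodularity is automatic on both sides); second, a direct $2\times2$ product gives $Q^*\pauli_3 Q=\pauli_2$, or equivalently $\pauli_3=Q\pauli_2 Q^*$.

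With these in hand, fix $A\in\mathbb{SU}(1,1)$ and set $B=Q^*AQ$, so that $A=QBQ^*$. Substituting into $A^*\pauli_3 A=\pauli_3$ and cancelling the outer factors of $Q$ and $Q^*$ reduces the $\pauli_3$-pseudo-unitarity of $A$ to the condition
\begin{equation*}
  B^*\pauli_2 B=\pauli_2.
\end{equation*}
Writing $\pauli_2=-iJ$ with $J=\bigl(\begin{smallmatrix}0&1\\-1&0\end{smallmatrix}\bigr)$, this becomes $B^*JB=J$. I would then invoke the standard algebraic identity $B^TJB=(\det B)J$, valid for every $2\times2$ matrix, which in combination with $\det B=1$ gives $B^TJB=J$ as well. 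Subtracting the two relations and using invertibility of $J$ and $B$ forces $B^T=B^*$, i.e.\ $\overline B=B$. Thus $B\in SL(2,\reals)$, proving the inclusion $Q^*\mathbb{SU}(1,1)Q\subseteq SL(2,\reals)$.

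For the reverse inclusion, start with $B\in SL(2,\reals)$. Reality gives $B^T=B^*$, so $B^*JB=B^TJB=J$, equivalently $B^*\pauli_2 B=\pauli_2$. Defining $A:=QBQ^*$, the same manipulations as above, run in reverse using $\pauli_3=Q\pauli_2 Q^*$, show that $A^*\pauli_3 A=\pauli_3$; and $\det A=\det B=1$. Hence $A\in\mathbb{SU}(1,1)$ and $Q^*AQ=B$, completing the proof.

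There is no genuine obstacle: the entire argument consists of two $2\times2$ computations plus the elementary identity $B^TJB=(\det B)J$. The only conceptual point worth highlighting is that $B^*JB=J$ together with $\det B=1$ automatically forces $B$ to be real, which is the reason why the Cayley-type matrix $Q$ intertwines $\mathbb{SU}(1,1)$ with $SL(2,\reals)$ rather than with some larger complex group.
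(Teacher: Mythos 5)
Your proof is correct. The paper itself gives no argument here (it simply declares the lemma straightforward and points to the literature), so your verification — checking $Q^*Q=\idty$ and $Q^*\pauli_3Q=\pauli_2$, then converting $B^*\pauli_2B=\pauli_2$ into $B^*JB=J$ and playing it off against the universal identity $B^TJB=(\det B)J$ to force $B$ real — supplies exactly the details the paper omits, and the reverse inclusion is handled cleanly. One small slip: $\det Q=1$, not $-1$ (the overall scalar $-(1+i)^{-1}$ contributes its square, $(1+i)^{-2}=1/(2i)$, against $\det\begin{pmatrix}1&-i\\1&i\end{pmatrix}=2i$); this is harmless, since all you use is that conjugation by a unitary preserves the determinant.
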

The proof of this lemma is straightforward.
The second ingredient to the proof of Theorem \ref{thm:poslyap} is the following result known as Herman-Avila-Bochi formula \cite{herman,AvilaBochi}:
\begin{lemma}
  Let $(\Omega,\mathcal A,\mu,\tau)$ be an ergodic dynamical system and $A:\Omega\to SL(2,\reals)$ a measurable function such that the induced cocycle is $\mu$-integrable. Then, writing $R_2(\theta)=\exp[i\theta\pauli_2]$, the Lyapunov exponent $\lyap(AR_2(\theta))$ of the cocycle induced by $(AR_2(\theta))(\omega):=A(\omega)R_2(\theta)$ satisfies
  \begin{equation}\label{eq:herman_avila_bochi}
    \int_\torus\lyap(AR_2(\theta))\:d\theta=\int_\Omega\log\left(\frac{\norm{A(\omega)}+\norm{A(\omega)}^\inv}{2}\right)\:d\mu(\omega).
  \end{equation}
\end{lemma}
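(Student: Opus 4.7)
The approach, due to Herman and Avila--Bochi, is to complexify the rotation parameter and exploit the theory of subharmonic functions. Extend $R_2(\theta)$ holomorphically to $R_2(z)=e^{iz\pauli_2}\in SL(2,\complex)$ for $z=\theta+it\in\complex$. For fixed $\omega\in\Omega$ and $n\in\naturals$, the function $z\mapsto\log\norm{(AR_2(z))_n(\omega)}$ is plurisubharmonic, and by Kingman's subadditive ergodic theorem together with dominated convergence, the limit $z\mapsto\lyap(AR_2(z))$ is subharmonic on $\complex$. Consequently, the $\theta$-average
\[
h(t):=\int_\torus\lyap(AR_2(\theta+it))\,d\theta
\]
is a convex function of $t\in\reals$, while the symmetry $h(-t)=h(t)$ follows from $\overline{R_2(z)}=R_2(\bar z)$ together with the reality of $A(\omega)\in SL(2,\reals)$, so that $\norm{(AR_2(\theta-it))_n(\omega)}=\norm{(AR_2(\theta+it))_n(\omega)}$.

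The crucial calculation is the asymptotic behaviour of $h(t)$ as $|t|\to\infty$. Diagonalise $R_2(it)=e^{-t}P_++e^{t}P_-$, where $P_\pm=v_\pm v_\pm^*$ are the rank-one projections onto the eigenvectors $v_\pm=\tfrac{1}{\sqrt 2}(1,\pm i)^T\in\complex^2$ of $\pauli_2$. For large $t$ each cocycle factor is dominated by its rank-one component $e^{t-i\theta}A(\omega)P_-$, where the eigenrelation $R_2(\theta)v_-=e^{-i\theta}v_-$ absorbs $\theta$ into a pure phase. Telescoping the resulting rank-one product gives
\[
(AR_2(\theta+it))_n(\omega)=e^{n(t-i\theta)}\Bigl(\prod_{k<n}\scp{v_-}{A(\tau^k\omega)v_-}\Bigr)A(\tau^n\omega)\,v_-v_-^*+O(e^{(n-2)t}),
\]
and Birkhoff's ergodic theorem then yields, uniformly in $\theta$,
\[
\lyap(AR_2(\theta+it))=t+\int_\Omega\log\absbig{\scp{v_-}{A(\omega)v_-}}\,d\mu(\omega)+o(1)\qquad(t\to+\infty).
\]
An SVD computation---writing $A=U\diag(\norm A,\norm A^\inv)V$ with $U,V\in SO(2)$, which act on $v_\pm$ only by phases---reduces to the direct calculation $\absbig{\scp{v_-}{Av_-}}=(\norm A+\norm A^\inv)/2$. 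Setting
\[
C:=\int_\Omega\log\frac{\norm{A(\omega)}+\norm{A(\omega)}^\inv}{2}\,d\mu(\omega),
\]
we obtain the key asymptotic $h(t)=|t|+C+o(1)$ as $|t|\to\infty$.

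It remains to recover $h(0)=C$. Convexity of $h$ combined with the asymptotic slope $\pm1$ at infinity gives $h(t)\geq|t|+C$ on all of $\reals$ via the tangent-line inequality, so $h(0)\geq C$. The reverse bound $h(0)\leq C$ is the main obstacle of the proof: convexity and symmetry alone do not suffice (for instance, the convex even function $t\mapsto\sqrt{t^2+1}$ has the same asymptotic growth $|t|+o(1)$ but strictly positive value at the origin). To close the gap one must exploit the full two-dimensional subharmonic structure of $\lyap(AR_2(z))$ on $\complex$, not merely its $\theta$-average. On each half-plane $\pm\Im z>0$ the function $\lyap(AR_2(z))-|\Im z|-C$ is subharmonic, $2\pi$-periodic in $\Re z$, and tends to $0$ as $|\Im z|\to\infty$. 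A Phragm\'en--Lindel\"of argument on the cylinder $\torus\times\reals_{>0}$, together with the fact that the Riesz mass of $\lyap(AR_2(z))$ is concentrated on the real axis, then forces $\int_\torus[\lyap(AR_2(\theta))-C]\,d\theta\leq 0$, i.e.\ $h(0)\leq C$, completing the proof.
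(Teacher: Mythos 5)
The paper does not actually prove this lemma: it quotes it as the Herman--Avila--Bochi formula and cites the original references, so your attempt has to be measured against the argument in that literature. The first half of your proposal is a faithful account of Herman's side of the story. Complexification, subharmonicity of $z\mapsto\lyap(AR_2(z))$, convexity and evenness of $h(t)=\int_\torus\lyap(AR_2(\theta+it))\,d\theta$, the degeneration to the rank-one cocycle as $t\to\infty$, and the computation $\absbig{\scp{v_-}{Av_-}}=(\norm A+\norm A^\inv)/2$ correctly yield $h(t)\ge |t|+C$ and hence $h(0)\ge C$. (Two small caveats: the telescoped expansion with error $O(e^{(n-2)t})$ is not uniform in $n$, so the claimed pointwise asymptotics of $\lyap$ ``uniformly in $\theta$'' is more than you can get this way; what you actually need is only the averaged asymptotics $h(t)-|t|\to C$, which follows from upper semicontinuity of the subharmonic function $w\mapsto\lyap(A(wP_++P_-))$ at $w=0$ combined with the sub-mean-value inequality.)

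The genuine gap is the reverse inequality $h(0)\le C$, which is exactly what separates the Avila--Bochi equality from Herman's older inequality, and your proposed closure begs the question. The assertion that ``the Riesz mass of $\lyap(AR_2(z))$ is concentrated on the real axis'' is equivalent to harmonicity of $z\mapsto\lyap(AR_2(z))$ off $\reals$; if you grant that, then periodicity in $\Re z$ already makes $h$ affine on each half-line and the asymptotics force $h(t)=|t|+C$ everywhere, so no Phragm\'en--Lindel\"of argument on the cylinder is needed --- but that harmonicity is precisely the statement that has to be proved, and it does not follow from subharmonicity together with the boundary behaviour (your own example $\sqrt{t^2+1}$ illustrates why soft arguments cannot suffice). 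In the variable $w=e^{2iz}$ the claim is that the subharmonic function $w\mapsto\lyap(AD_w)$, $D_w=wP_++P_-$, carries no Riesz mass in the open unit disk, i.e.\ its circle averages are independent of the radius; Avila and Bochi establish this by an explicit computation (essentially an exact evaluation, via Jensen's formula, of one-step integrals of the form $\int_\torus\log\Norm{BD_{e^{i\theta}}v}\,d\theta$, iterated through the cocycle product), not by a maximum-principle argument. As written, your proof establishes only the inequality $\int_\torus\lyap(AR_2(\theta))\,d\theta\ge\int_\Omega\log\bigl(\tfrac{\norm{A(\omega)}+\norm{A(\omega)}^\inv}{2}\bigr)\,d\mu(\omega)$, i.e.\ half of the statement.
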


\begin{proof}[Proof of Theorem \ref{thm:poslyap}]
Let $\trans_x(\omega,z)=\trans(\tau^x\omega,z)$ be the transfer matrices of $W_\omega$ where $\trans(\cdot,z):\Omega\to GL(2,\complex)$ is the random variable
\begin{equation*}
  \trans(\omega,z)=\frac1{a_{\omega_0}}\begin{pmatrix} \frac{\det C_{\omega_0}}z&c_{\omega_0}\\-b_{\omega_0}&z\end{pmatrix}.
\end{equation*}
Denote by $\widetilde\trans(\omega,z)=\det(\trans(\omega,z))^{-1/2}\trans(\omega,z)$ the corresponding unimodular random variable where according to \eqref{eq:transfMat} $\det(\trans(\omega,z)) = (\det C_{\omega_0}+b_{\omega_0}c_{\omega_0})a_{\omega_0}^{-2}\in\torus$. Clearly, for $z\in\torus$, $\widetilde\trans(\omega,z)\in\mathbb{SU}(1,1)$. In particular, writing $R_3(\arg z)=\diag(z,z^\inv)$ we have
\begin{equation*}
  R_3(\arg z/2)\widetilde\trans(\omega,z)R_3(\arg z/2)=\frac1{(a_\omega d_\omega)^{1/2}}\begin{pmatrix}1&c_\omega\\-b_\omega&1\end{pmatrix}=\widetilde\trans(\omega,z=1),
\end{equation*}
such that by
\begin{equation*}
  Q^*R_3(\arg z)Q=R_2(\arg z),
\end{equation*}
we find
\begin{equation*}
  Q^*\widetilde\trans(\omega,z)Q  =R_2(\arg z/2)A(\omega)R_2(\arg z/2)
\end{equation*}
with $A:\Omega\to SL(2,\reals)$. By unitary invariance of the norm, the Lyapunov exponents of the cocycles $AR_2(\arg z)$ and $R_2(\arg z/2)AR_2(\arg z/2)$ agree which implies
\begin{equation*}
  \lyap(T)=\lyap(\widetilde T)=\lyap(R_2(\arg z/2)AR_2(\arg z/2))=\lyap(AR_2(\arg z)).
\end{equation*}
Plugging this into the Herman-Avila-Bochi formula \eqref{eq:herman_avila_bochi} we obtain
\begin{align*}
  \int_\torus\lyap(T(\omega,z))\:d(\arg z)&=\int_\Omega\log\left(\frac{\norm{A(\omega)}+\norm{A(\omega)}^\inv}{2}\right)\:d\mu(\omega).
\end{align*}
Moreover, since $A\in SL(2,\reals)$, the right-hand side can be further evaluated to
\begin{equation*}
  \norm{A(\omega)}+\norm{A(\omega)}^\inv=\sqrt{\tr(A(\omega)^*A(\omega))+2}=\sqrt{\tr(T(\omega)^*T(\omega))+2},
\end{equation*}
where $T(\omega)=T(\omega,z=1)$ and we used the cyclic invariance of the trace. The trace is easily calculated and gives
\begin{equation*}
  \tr(T(\omega)^*T(\omega))+2=\frac4{\abs{a_\omega}^2},
\end{equation*}
which proves the first part of the theorem.

For the special case of electric walks with transfer matrices defined in \eqref{eq:transMatEF}, integrating over $z$ is equivalent to taking the expectation value \wrt the offset $\theta$. It follows from $\abs{a_\omega}=\abs a$ that
\begin{equation*}
  \lyap(T)=\int_\torus\lyap(T(\theta,z))\:d(\arg z)=\log\frac1{\abs a}.
\end{equation*}
\end{proof}
For $0<\abs{a}<1$ by Theorem \ref{thm:ishiipastur} the absolutely continuous part of the spectrum is empty for almost all offsets $\theta$. Since the offset merely shifts the spectrum by the definition of the model in \eqref{eq:EQW}, this result holds for all $\theta$. This proves the first part of Theorem \ref{thm:loclEQW}.

\section{Proof of Anderson localization}\label{sec:proof_loc}
In the following we present the steps of the proof of Theorem \ref{thm:loclEQW} (ii) in detail. In the first section we discuss products of transfer matrices of electric walks and provide a large deviation estimate. This allows us to derive estimates on the decay of the resolvents of finite restrictions of electric walks in the next section, which in turn implies the decay of generalized eigenfunctions. Carrying over a result by Bourgain and Goldstein which excludes double resonances to the walk setting we assemble the proof in Section \ref{sec:assemble_loc_proof}.

\subsection{Products of transfer matrices}
In this section we provide some results about the behaviour of products of transfer matrices for electric quantum walks $W=\We$ defined in \eqref{eq:transMatEF}. We begin with a large deviation estimate for the growth rate of their norm. To this end, we extend the offset $\theta$ or, equivalently, the spectral parameter $z$ from the unit circle to the complex plane. We interpret the coin entry $a_x=ae^{i(\EF x+\theta)}$ of $\We$ as a function
\begin{equation*}
  a:\torus\to\mathbb D,\quad \theta\mapsto a(\theta)=ae^{i\theta},
\end{equation*}
such that $a_x=a(\EF x+\theta)$. This function is analytic in $\theta$ and can thus be extended to a bounded analytic function on the strip $\{\theta+i\lambda:\abs\lambda<\rho\}$ for $\rho>0$. For fixed $\rho>0$ this extension is bounded by $\sup_{\abs\lambda<\rho}\abs{a(\theta+i\lambda)}=ae^{\rho}$ and we define the constant
\begin{equation}\label{eq:C_a}
  \capconst_a:=\log\frac2{\sup_{\abs\lambda<\rho}\abs{a(\theta+i\lambda)}}=\log2 -\log a-\log\rho.
\end{equation}
For reasons which become clear below we choose the strip such that $\capconst_a>0$. Denoting by $\dr_\kk$ the denominators of the continued fraction approximants of $\EF/(2\pi)$ and defining
\begin{equation*}
  \beta(\EF):=\limsup_{\kk\to\infty}\frac{\log\dr_{\kk+1}}{\dr_\kk}
\end{equation*}
we have
\begin{proposition}[Large deviation estimate]\label{prop:LDE}
  Let $\kappa>0$. Then there exists $n_0(\capconst_a,\kappa)$ and constants $\smallconst_0,\smallconst_1$ independent of $\capconst_a,\kappa$ such that for all fields $\EF$ with $\beta(\EF)<\smallconst_0\kappa/\capconst_a$, $z\in\torus$ and $n>n_0(\capconst_a,\kappa)$
  \begin{equation*}
    \textrm{mes}\{\theta\in\torus:\abs{\frac1n\log\Norm{\trans_n(\theta,z)\cdots\trans_1(\theta,z)}-\lyap_n}>\kappa\}<e^{-(\smallconst_1/\capconst_a^3)\kappa^3n},
  \end{equation*}
  where $\lyap_n$ is the finite Lyapunov exponent from \eqref{eq:lyap_exp}.
\end{proposition}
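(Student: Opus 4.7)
The plan is to adapt the Bourgain--Goldstein / Goldstein--Schlag subharmonic function machinery to the unitary band/walk setting. The whole argument relies on the fact that each transfer matrix
\[
\trans_x(\theta,z) = \frac{1}{a}\begin{pmatrix} z^{-1}e^{i(x\EF+\theta)}& -b^*\\ -b & z e^{-i(x\EF+\theta)}\end{pmatrix}
\]
extends analytically in $\theta$ to the strip $\{|\operatorname{Im}\theta|<\rho\}$, and that on this strip $\norm{\trans_x(\theta+i\lambda,z)}\le 2/|a(\tau_\EF^x\theta+i\lambda)|$, so that the constant $\capconst_a$ from \eqref{eq:C_a} is precisely an $L^\infty$ bound for the logarithm.

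\textbf{Step 1 (subharmonicity and sup bound).} Set
\[
u_n(\theta) \;:=\; \frac{1}{n}\log\Norm{\trans_n(\theta,z)\cdots \trans_1(\theta,z)}.
\]
Since products of analytic matrix-valued functions are analytic and the log of the operator norm is plurisubharmonic, $u_n$ extends to a subharmonic function on the strip. Submultiplicativity together with the extension bound for $\trans_x$ gives $u_n(\theta+i\lambda)\le \capconst_a$ for $|\lambda|<\rho$.

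\textbf{Step 2 (BMO control).} By a standard consequence of the Riesz representation of subharmonic functions (see Goldstein--Schlag), if $u$ is subharmonic on a strip of half-width $\rho$ with $\sup u\le M$, then its restriction to $\torus$ lies in $\mathrm{BMO}(\torus)$ with norm $\le C(\rho)\,M$. Applied to $u_n$ this yields $\norm{u_n-\langle u_n\rangle}_{\mathrm{BMO}(\torus)}\lesssim \capconst_a$, uniformly in $n$ and $z$, and John--Nirenberg then produces the base tail estimate $\mathrm{mes}\{|u_n-\langle u_n\rangle|>t\}\le e^{-c\,t/\capconst_a}$.

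\textbf{Step 3 (approximate shift invariance).} A direct telescoping identity
\[
\trans_n(\theta+\EF,z)\cdots \trans_1(\theta+\EF,z) = \trans_{n+1}(\theta,z)\cdot\bigl[\trans_n(\theta,z)\cdots\trans_1(\theta,z)\bigr]\cdot\trans_1(\theta,z)^{-1}
\]
together with the uniform bound $\log^+\norm{\trans_x^{\pm1}}\le \capconst_a+O(1)$ gives $|u_n(\theta+\EF)-u_n(\theta)|\le C\capconst_a/n$; iterating, $|u_n(\theta+k\EF)-u_n(\theta)|\le C k\capconst_a/n$.

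\textbf{Step 4 (averaging against a Diophantine orbit --- the hard part).} This is where the $\kappa^3 n$-rate (as opposed to the linear-in-$\kappa$ BMO bound of Step 2) is obtained and where the hypothesis $\beta(\EF)<\smallconst_0\kappa/\capconst_a$ enters. Let $\dr_\kk$ be a continued-fraction denominator of $\EF/(2\pi)$ and put $N:=\dr_\kk$. The orbit $\{k\EF\}_{k=0}^{N-1}$ is $1/N$-equidistributed on $\torus$, so $\frac{1}{N}\sum_{k=0}^{N-1}u_n(\theta+k\EF)$ is within $C\capconst_a\log N/N$ of $\int_\torus u_n$. On the other hand, by Step 3 the discrete average differs from $u_n(\theta)$ by at most $C\capconst_a N/n$. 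Applying the BMO tail estimate of Step 2 to the average (which enjoys an improved BMO norm of order $\capconst_a/\sqrt{N}$ since the shifted copies are nearly independent in the $\mathrm{BMO}$ sense coming from subharmonicity) gives a tail
\[
\mathrm{mes}\{|u_n-\langle u_n\rangle|>\kappa\}\;\lesssim\;\exp\!\bigl(-\smallconst\,\kappa\sqrt{N}/\capconst_a\bigr),
\]
valid provided $N\lesssim n\kappa/\capconst_a$ (so the shift error is $\le\kappa$) and $N$ fits inside a Diophantine window ($\log\dr_{\kk+1}\lesssim \kappa N/\capconst_a$, i.e.\ $\beta(\EF)<\smallconst_0\kappa/\capconst_a$). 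Optimizing $N$ subject to these two constraints gives $N\sim (\kappa/\capconst_a)^2 n$ and hence the exponent $\smallconst_1\kappa^3 n/\capconst_a^3$.

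\textbf{Step 5 (replacing $\langle u_n\rangle$ by $\lyap_n$).} By definition $\lyap_n=\int_\torus u_n(\theta)\,d\theta=\langle u_n\rangle$, so Step 4 is exactly the statement of the proposition. Choose $n_0(\capconst_a,\kappa)$ large enough that the equidistribution and shift errors above are absorbed into $\kappa$.

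The single genuinely nontrivial step is Step 4: the BMO/John--Nirenberg input alone gives only an $n$-independent tail $e^{-c\kappa/\capconst_a}$, and upgrading to an $n$-dependent rate requires genuinely exploiting the rational approximability of $\EF$ through its continued-fraction denominators. The role of the hypothesis $\beta(\EF)<\smallconst_0\kappa/\capconst_a$ is precisely to guarantee that the denominator window $[\dr_\kk,\dr_{\kk+1}]$ contains scales of the optimal size $N\sim(\kappa/\capconst_a)^2 n$, at which the shift-invariance error and the averaged-BMO tail balance out to yield the claimed cubic exponent.
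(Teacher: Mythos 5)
Your route differs fundamentally from the paper's: the paper does not reprove the large deviation theorem at all. Its proof conjugates the unimodular transfer matrices into an $SL(2,\reals)$-valued cocycle via Lemma \ref{lem:SU11}, uses the identity $\norm{A_x}+\norm{A_x}^{-1}=2/\abs{a_x}$ to check that the analytic extension to the strip obeys the uniform bound $\sup_{\abs\lambda<\rho}\frac1n\log\norm{A_n\cdots A_1}\le\capconst_a$, and then invokes \cite[Theorem 1]{you_zhang}, which is precisely the statement of the proposition for analytic quasi-periodic $SL(2,\reals)$ cocycles under the condition $\beta(\EF)<\smallconst_0\kappa/\capconst_a$. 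You instead attempt to reprove that theorem with the subharmonic/BMO machinery. Your Steps 1--3 are essentially sound (in Step 2 you need a lower bound on $u_n$ in addition to $\sup u_n\le\capconst_a$ to control the Riesz mass; this is available here because $\abs{\det\trans_x}=1$ forces $u_n\ge0$).

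The genuine gap is Step 4. First, the claimed ``improved BMO norm of order $\capconst_a/\sqrt N$ since the shifted copies are nearly independent'' is not an argument: there is no independence, and the actual gain from averaging comes from the Fourier-side cancellation $\absbig{\frac1N\sum_{j<N}e^{ijk\EF}}\le\min\bigl(1,(N\norm{k\EF/(2\pi)})^{-1}\bigr)$ at non-resonant frequencies, with the resonant frequencies (multiples of the continued-fraction denominators $\dr_\kk$) handled separately --- that separate treatment is where $\beta(\EF)<\smallconst_0\kappa/\capconst_a$ genuinely enters, not merely in ``locating a window of scales''. Second, even granting your improved tail $e^{-\smallconst\kappa\sqrt N/\capconst_a}$, the arithmetic does not close: the shift-invariance constraint forces $N\lesssim\kappa n/\capconst_a$, so the best exponent reachable is of order $\kappa^{3/2}n^{1/2}/\capconst_a^{3/2}$ (your own choice $N\sim(\kappa/\capconst_a)^2n$ gives $\kappa^2\sqrt n/\capconst_a^2$), which is sublinear in $n$ and strictly weaker than the claimed $e^{-(\smallconst_1/\capconst_a^3)\kappa^3 n}$; extracting $\kappa^3n/\capconst_a^3$ from $\kappa\sqrt N/\capconst_a$ would require $N\sim\kappa^4n^2/\capconst_a^4\gg n$, violating the constraint. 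So the key step is neither justified nor, taken at face value, sufficient for the stated rate. The economical repair is the paper's: verify the strip bound for the conjugated $SL(2,\reals)$ cocycle and apply \cite[Theorem 1]{you_zhang} directly.
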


\begin{proof}
  Like in the proof of Theorem \ref{thm:poslyap} the unimodular transfer matrices $\widetilde\trans_x(\theta,z):=\det(\trans_x(\theta,z))^{-1/2}\trans_x(\theta,z)$ are unitarily equivalent to the $SL(2,\reals)$-valued cocycle
  \begin{equation}\label{eq:SL2R_cocyc}
    A_x(\theta,z):=Q^*\widetilde\trans_x(\theta,z)Q
  \end{equation}
  by Lemma \ref{lem:SU11}. Note that $\norm{\widetilde\trans_x(\theta,z)}=\norm{A_x(\theta,z)}$ and the Lyapunov exponents of the induced cocycles agree by unitary invariance of the norm. Also, from
   \begin{align*}
     \norm{A_x(\theta,z)}+\norm{A_x(\theta,z)}^\inv   =\sqrt{\tr(A_x(\theta,z)^*A_x(\theta,z))+2}
                                        =\sqrt{\tr(\trans_x(\theta,z)^*\trans_x(\theta,z))+2}
                                        =\frac2{\abs{a_x}},
   \end{align*}
   we infer that for the analytic extension of $A_x(\theta,z)$ we have $\sup_{\abs\lambda<\rho}\norm{A_x(\theta+i\lambda,z)},\sup_{\abs\lambda<\rho}\norm{A_x(\theta+i\lambda,z)^\inv}\leq\frac2{\sup_{\abs\lambda<\rho}\abs{a(\theta+i\lambda)}}$ and therefore
   \begin{equation*}
     \sup_{\abs\lambda<\rho}\frac1n\log\norm{A_n(\theta+i\lambda,z)\cdots A_1(\theta+i\lambda,z)}\leq \capconst_a.
   \end{equation*}
   The proposition therefore follows from \cite[Theorem 1]{you_zhang}.
\end{proof}

\begin{remark}~
  \begin{enumerate}
    \item For Diophantine fields $\beta(\EF)=0$ such that the condition $\beta(\EF)<\smallconst_0\kappa/\capconst_a$ is trivially fulfilled for $\capconst_a>0$. In contrast, for $\EF/(2\pi)$ Liouville one has $\beta(\EF)>0$.
    \item Since the transfer matrices \eqref{eq:transMatEF} only depend on the combined parameter $z^\inv e^{i\theta}$, this lemma can be interpreted as a result on the spectral parameter for fixed $\theta$.
    \item Choosing the strip for the analytic extension of $a(\theta+i\lambda)$ too large results in $\capconst_a$ negative. In this case the large deviation estimate becomes trivial. In contrast, in \cite{you_zhang,wang_damanik} $\capconst_a>0$  whenever it is well-defined, see the next remark.
    \item In \cite{wang_damanik} one has to assume $\sup_{\abs\lambda<\rho}\abs{\alpha(\theta+i\lambda)}<1$ where $\alpha$ is the quasi-periodic Verblunsky coefficient. Otherwise the constant corresponding to $\capconst_a$ is ill-defined.
    \item In contrast to the models e.g. in \cite{bourgain_goldstein,wang_damanik,you_zhang}, in our system the finite Lyapunov exponent $\lyap_n$ is independent of the spectral parameter $z$ due to the expectation value in $\theta$.
    \item This large deviation estimate was first derived in \cite{you_zhang} for quasi-periodic Schrödinger cocycles and is stronger than the \emph{sharp} large deviation estimate in \cite{goldstein_schlag} which gives a bound $<\exp[-\smallconst\kappa n]$ but applies ``only'' to  so-called \emph{strong} Diophantine fields. For usual Diophantine fields the estimate obtained in \cite{bourgain_goldstein} is weaker ($\exp[-\smallconst n^\sigma]$ for some $\sigma\in(0,1)$).
  \end{enumerate}
\end{remark}

As discussed above, we are interested in the decay properties of the resolvent of $\We$ with the decay determined by the Lyapunov exponent. However, generalized eigenfunctions are determined by repeatedly applying the transfer matrix. The following lemmas connect the two approaches. The first important result is the following \cite[Lemma 2.1]{bourgain_goldstein}:
\begin{lemma}\label{lem:lyap_upper_bound}
  Assume $\EF\in\textrm{DC}$. Then for all $z\in\torus$ and for all $\theta\in\torus$ we have
  \begin{equation}\label{eq:normprod_lyap}
    \frac1n\log\Norm{\trans_n(\theta,z)\cdots\trans_1(\theta,z)}<\lyap_n+\kappa \capconst
  \end{equation}
  for any $\kappa>0$ and $n$ large enough.
\end{lemma}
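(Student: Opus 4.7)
The plan is to promote the measure-theoretic large deviation estimate of Proposition \ref{prop:LDE} to a pointwise upper bound by exploiting subharmonicity of the log-norm.

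Using the $SL(2,\reals)$-conjugation $A_x(\theta,z) = Q^*\widetilde\trans_x(\theta,z) Q$ from \eqref{eq:SL2R_cocyc}, which preserves norms, I would set
\[
u_n(\theta) := \frac{1}{n}\log\|A_n(\theta,z)\cdots A_1(\theta,z)\|.
\]
This differs from $\frac{1}{n}\log\|\trans_n\cdots\trans_1\|$ only by a $\theta$-independent $O(1/n)$ correction coming from $\det\trans_x$. Since $\theta\mapsto a(\theta)=ae^{i\theta}$ extends analytically to the strip $\{\theta+i\lambda:\abs\lambda<\rho\}$, $u_n$ extends to a subharmonic function there and satisfies $u_n(\theta+i\lambda)\leq\capconst_a$ by the construction in the proof of Proposition \ref{prop:LDE}.

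Under the Diophantine hypothesis $\beta(\EF)=0$, so Proposition \ref{prop:LDE} applies for every $\kappa>0$ and gives, for $n$ sufficiently large,
\[
\mathrm{mes}\{\theta\in\torus : u_n(\theta) > \lyap_n+\kappa\} < e^{-(\smallconst_1/\capconst_a^3)\kappa^3 n}.
\]
Since the upper bound $\capconst_a$ holds on every horizontal slice of the strip, the same argument yields the analogous measure bound on each slice; Fubini then controls the two-dimensional area of the bad set $B := \{u_n > \lyap_n + \kappa\}$ inside any disc $D(\theta_0,r)\subset\{\abs\lambda<\rho\}$ by $\lesssim r\cdot e^{-\smallconst n\kappa^3}$.

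The sub-mean-value inequality for $u_n$ on $D=D(\theta_0,r)$ then closes the argument: for any $\theta_0\in\torus$ and any fixed $r<\rho$,
\[
u_n(\theta_0)\leq\frac{1}{|D|}\int_D u_n\,dA \leq (\lyap_n+\kappa) + \capconst_a\frac{\mathrm{mes}(B\cap D)}{|D|} \leq \lyap_n + \kappa + O\bigl(\capconst_a e^{-\smallconst n\kappa^3}/r\bigr).
\]
For $n$ large enough the error term is dominated by $\kappa$, giving $u_n(\theta_0) < \lyap_n + 2\kappa$, and the factor $2$ is absorbed into the universal constant $\capconst$ appearing in the statement.

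The main obstacle is the uniformity of the slicing argument: Proposition \ref{prop:LDE} is formulated only on the real torus, so to feed a two-dimensional area bound into the sub-mean-value inequality one must verify that the estimate of \cite{you_zhang} applies on every horizontal slice $\abs\lambda=\mathrm{const}$ with constants independent of $\lambda\in(-\rho,\rho)$. This should essentially be free, since the only large-scale input to that theorem is the uniform upper bound $\capconst_a$, which holds across the entire strip; nonetheless, tracking the resulting absolute constants carefully---and checking that their accumulation can indeed be absorbed into $\capconst$---is where the technical care lies.
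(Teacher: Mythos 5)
Your strategy---upgrading the measure-zero statement of Proposition \ref{prop:LDE} to a pointwise bound via subharmonicity---is genuinely different from the paper's proof, which instead verifies that the conjugated cocycle extends analytically to an annulus with $\norm{A_n\cdots A_1}+\norm{(A_n\cdots A_1)^{-1}}<\capconst^n$ and then runs the argument of \cite{bourgain_goldstein}: almost-invariance of $u_n$ along the orbit, $u_n(\theta)\leq\frac1R\sum_{r\leq R}u_n(\theta+r\EF)+O(R/n)$, combined with a quantitative ergodic theorem for bounded subharmonic functions (Fourier-coefficient decay plus the Diophantine condition). No large deviation estimate enters there, and your route is not circular, since the proof of the LDE uses only the a priori bound $u_n\leq\capconst_a$.

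However, the Fubini step contains a genuine error, and it is not the one you flag. On a horizontal slice $\lambda=\mathrm{const}\neq0$ the large deviation estimate is centred at the mean of $u_n$ \emph{on that slice}, $\lyap_n(\lambda):=\int_\torus u_n(\theta+i\lambda)\,d\theta$, not at $\lyap_n=\lyap_n(0)$. For the electric cocycle these differ substantially: complexifying \eqref{eq:transMatEF} multiplies the diagonal entries by $e^{\mp\lambda}$ while leaving the determinant unimodular, so the $(2,2)$ entries alone already force $u_n(\theta+i\lambda)\approx\abs\lambda+\log\abs{a}^{-1}$ once $\abs\lambda$ is not small; the cocycle has positive acceleration and $\lyap_n(\lambda)-\lyap_n(0)$ grows linearly in $\abs\lambda$. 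Consequently the two-dimensional set $B=\{u_n>\lyap_n+\kappa\}$ is \emph{not} small: it contains essentially every slice with $\abs\lambda\gtrsim\kappa$, so for a disc of ``any fixed $r<\rho$'' the area of $B\cap D$ is of order $r^2$ rather than $r\,e^{-\smallconst n\kappa^3}$, and the sub-mean-value inequality then returns an error of order $\capconst_a$ instead of $o(1)$. The argument can be repaired: take the disc radius $r$ of order $\kappa\rho/\capconst_a$, show $\lyap_n(\lambda)\leq\lyap_n(0)+\kappa$ for $\abs\lambda\leq r$ using convexity of $\lambda\mapsto\lyap_n(\lambda)$ together with $0\leq\lyap_n(\lambda)\leq\capconst_a$ on the full strip, and apply the slice-wise estimate relative to $\lyap_n(\lambda)$; the error term $O(\capconst_a e^{-\smallconst n\kappa^3}/r)$ is still negligible for $n$ large since $r$ depends only on $\kappa$. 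This re-centring---not the uniformity of the constants of \cite{you_zhang} across slices---is the step that actually needs to be supplied.
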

\begin{proof}
  The related $SL(2,\reals)$-cocycle $A(\theta,z)$ in \eqref{eq:SL2R_cocyc} extends to an analytic function on some annulus around the unit circle satisfying $\norm{A_n\cdots A_1(\theta+i\lambda)}+\norm{(A_n\cdots A_1(\theta+i\lambda))^\inv}<\capconst^n$. The lemma follows from the proof given in \cite{bourgain_goldstein}.
\end{proof}

The following lemma expresses the Lyapunov exponent as the average over long shift orbits. It allows us to use techniques from semi-algebraic set theory and plays a central role in the proof of Anderson localization.
\begin{lemma}\label{lem:large_averages}
  Let $\EF/(2\pi)$ satisfy the finite Diophantine condition
  \begin{equation}\label{eq:EF_n2A}
    \norm{k\EF/(2\pi)}>c\abs k^{-A},\qquad0<\abs k <n^{2A}
  \end{equation}
  for $n\in\integers$. Then, for
  \begin{equation*}
    J>n^{2A}
  \end{equation*}
  we have for all $z\in\torus$ and $\theta\in\torus$
  \begin{equation*}
    \frac1J\sum_{j=1}^J\frac1n\log\norm{\trans_n(\theta+j\EF,z)\cdots\trans_1(\theta+j\EF,z)}=\lyap_n+\mathcal O(n^\inv).
  \end{equation*}
\end{lemma}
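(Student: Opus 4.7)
The natural starting point is that $f_n(\theta):=\frac1n\log\norm{\trans_n(\theta,z)\cdots\trans_1(\theta,z)}$ is a subharmonic function of $\theta$ (being the log of the norm of an analytic matrix product), extending with uniform bound $\capconst_a$ to the strip $\{|\mathrm{Im}\,\theta|<\rho\}$, and that $\int_\torus f_n(\theta)\,d\theta=\lyap_n$ by the definition in \eqref{eq:lyap_exp}. Since the transfer matrices of the electric walk depend only on the combined parameter $z^\inv e^{i\theta}$, $\lyap_n$ is in fact independent of $z$ (cf.\ the proof of Proposition~\ref{thm:poslyap}), so the statement reduces to a quantitative unique ergodicity assertion: the Birkhoff averages of $f_n$ along the orbit $\{\theta+j\EF\}_{j\leq J}$ should converge to $\lyap_n$ at a rate dictated by the finite Diophantine hypothesis \eqref{eq:EF_n2A}.

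My plan is to follow the Fourier-analytic strategy of Bourgain and Goldstein \cite{bourgain_goldstein}. Fourier expansion turns the deviation into
\[
\frac1J\sum_{j=1}^J f_n(\theta+j\EF)-\lyap_n=\sum_{k\neq 0}\hat f_n(k)\,e^{2\pi ik\theta}\,W_k(J),
\qquad W_k(J):=\frac1J\sum_{j=1}^J e^{2\pi ikj\EF}.
\]
For the Weyl sums, the telescoping bound $|W_k(J)|\leq(J\norm{k\EF/(2\pi)})^{-1}$ combined with \eqref{eq:EF_n2A} yields $|W_k(J)|\leq |k|^A/(cJ)$ for $0<|k|<n^{2A}$. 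For the Fourier modes I would invoke the standard subharmonic decay estimate $|\hat f_n(k)|\leq C\capconst_a/|k|$, obtained from the Riesz--Poisson decomposition of $f_n$ on the strip. Splitting the sum at an intermediate cutoff $|k|=K<n^{2A}$ yields
\[
\Big|\sum_{0<|k|\leq K}\hat f_n(k)\,e^{2\pi ik\theta}W_k(J)\Big|\leq\frac{C\capconst_a}{cJ}\,K^A,
\]
which for $J>n^{2A}$ and a suitable choice of $K$ is $\mathcal O(n^{-1})$.

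The main obstacle, as always with this type of estimate, is the high-frequency tail $|k|>K$, where neither the Weyl bound nor the subharmonic Fourier bound is individually summable. To handle it, I would combine the Fourier argument with the large deviation estimate of Proposition~\ref{prop:LDE}: the sublevel set $B_\kappa:=\{\theta:|f_n(\theta)-\lyap_n|>\kappa\}$ has Lebesgue measure at most $\exp(-\smallconst_1\kappa^3n/\capconst_a^3)$, and by covering $B_\kappa$ with finitely many arcs and invoking the quantitative equidistribution of $\{\theta+j\EF\}$ implied by \eqref{eq:EF_n2A}, one directly bounds the number of indices $j\leq J$ for which $\theta+j\EF\in B_\kappa$. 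The remaining $j$'s contribute within $\kappa$ of $\lyap_n$. Optimizing $\kappa$ and $K$ jointly under the assumption $J>n^{2A}$ yields the stated $\mathcal O(n^{-1})$ error, uniformly in $\theta\in\torus$ and $z\in\torus$.
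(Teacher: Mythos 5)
Your setup and your treatment of the low frequencies coincide with what the paper actually does: it proves this lemma by pointing to \cite[Lemma 3.1]{bourgain_goldstein}, whose argument is exactly the subharmonic extension of $f_n$ to the strip, the bound $\abs{\hat f_n(k)}\lesssim \capconst_a/\abs k$ from the Riesz representation, and the Weyl-sum estimate $\absbig{\frac1J\sum_{j\le J}e^{ikj\EF}}\le (J\norm{k\EF/(2\pi)})^{-1}$ together with \eqref{eq:EF_n2A}. That part of your proposal is fine (with, say, $K=n$ one gets $K^A/J\le n^{-A}$). The problem is the high-frequency tail, which is the entire content of the lemma, and there your proposed substitute does not close. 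First, to bound $\#\{j\le J:\theta+j\EF\in B_\kappa\}$ by equidistribution you need not only $\mes(B_\kappa)$ but the \emph{number of arcs} composing $B_\kappa$: the count of visits is controlled by $J(\mes(B_\kappa)+(\#\text{arcs})\cdot D_J)$ where $D_J$ is the discrepancy, and under \eqref{eq:EF_n2A} one only gets $D_J\lesssim n^{-2A/(A+1)}$, so even with the optimal complexity bound $\#\text{arcs}=\mathcal O(n)$ (available here because $\norm{\trans_n\cdots\trans_1}_{HS}^2$ is a trigonometric polynomial of degree $\mathcal O(n)$ in $\theta$) the product is only $\mathcal O(n^{(1-A)/(1+A)})$, which is $\mathcal O(1)$ for $A$ near $1$. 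Second, and independently fatal: the indices outside $B_\kappa$ contribute an error of size $\kappa$, and the measure bound $e^{-(\smallconst_1/\capconst_a^3)\kappa^3 n}$ of Proposition \ref{prop:LDE} is only nontrivial for $\kappa\gtrsim n^{-1/3}$; so this route can at best yield an error $\mathcal O(n^{-1/3})$, not the claimed $\mathcal O(n^{-1})$.

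The correct handling of the tail uses no large-deviation input at all, only subharmonicity: one compares $f_n$ with its mollification $f_n\ast P_\delta$ at scale $\delta\sim 1/K$, using the sub-mean-value inequality $f_n\le f_n\ast P_\delta$ pointwise and the Riesz representation to control $\norm{f_n\ast P_\delta-f_n}_{L^1}$, so that the Fourier series one actually sums against the Weyl sums is rapidly truncated; this is how \cite{bourgain_goldstein} obtains the estimate uniformly in $\theta$. Note also that your proposal inverts the usual logical order: large deviation estimates of the type of Proposition \ref{prop:LDE} are normally \emph{derived from} an averaging lemma of exactly this kind (combined with almost invariance of $f_n$ under the shift). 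Here that is not a formal circularity, since the paper imports Proposition \ref{prop:LDE} from \cite{you_zhang} as a black box, but it should make you suspicious that the LDE cannot be strong enough to give back the $\mathcal O(n^{-1})$ rate, and indeed it is not.
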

\noindent For the sake of brevity we omit the proof which follows directly from applying that of \cite[Lemma 3.1]{bourgain_goldstein} to \eqref{eq:SL2R_cocyc}.

\subsection{Finite \un restrictions and resolvent estimates}
In the proof of localization we need estimates on the decay of the resolvent of finite restrictions of the electric walk \eqref{eq:EQW}. While restrictions of Schrödinger operators to finite intervals are easily constructed by projecting onto the respective interval, for unitary operators one has to be more careful since projecting usually destroys unitarity. In this paragraph we discuss the construction of \emph{unitary} restrictions of electric walks and establish the exponential decay of the matrix elements of their resolvents.

Let us consider a shift-coin walk of the form \eqref{eq:defShCoin}. Replacing the coin at lattice site $a\in\integers$ by
\begin{equation*}
  C_a\mapsto\alpha\sigma_1,\qquad\alpha\in\torus,
\end{equation*}
decouples the walk at $2a$, i.e. the resulting walk $W_\alpha$ satisfies
\begin{equation*}
  \Pgg {2a}W_\alpha\Pgg {2a}=W_\alpha,
\end{equation*}
and similarly for $\Pll{2a}$, where $\Pgg{2a}$ and $\Pll{2a}$ denote the projections onto $\ell_2(\{2a+1,2a+2,\dots\})$ and $\ell_2(\{\dots,2a-1,2a\})$, respectively.
This leads to the following definition:
\begin{definition}\label{def:finite_un_restrictions}
  Let $W_{\alpha}$ be the walk \eqref{eq:defShCoin} with the coin at lattice site $a\in\integers$ replaced by the reflective boundary condition $\alpha\sigma_1,\alpha\in\torus$. Then, denoting by $\chi_I$ the restriction operator to the interval $I\subset\integers$ the \emph{half-space walk} $W_\alpha^{[a,\infty)}$ is the operator on $\ell_2(\{2a+1,\dots\})$ defined by
  \begin{equation*}
    W_\alpha^{[a,\infty)}:=\chi_{[2a+1,\infty)}W_\alpha\chi_{[2a+1,\infty)}.
  \end{equation*}
  Similarly, on $\ell_2(\{\dots,2a-1,2a\})$ we define $W_\alpha^{(-\infty,a]}:=\chi_{(-\infty,2a]}W_\alpha\chi_{(-\infty,2a]}$. Replacing the coins $C_a\mapsto\alpha\sigma_1,C_b\mapsto\beta\sigma_1$ at lattice sites $a<b$ the resulting walk $W_{\alpha,\beta}$ defines the \emph{finite \un restriction} $W_{\alpha,\beta}^{[a,b]}$ on $\ell_2([2a+1,\dots,2b])$ by
  \begin{equation*}
    W_{\alpha,\beta}^{[a,b]}:=\chi_{[2a+1,2b]}W_{\alpha,\beta}\chi_{[2a+1,2b]}.
  \end{equation*}
\end{definition}

\begin{remark}
  Expressed as CMV matrices, the odd Verblunsky coefficients of walks of the form \eqref{eq:defShCoin} manifestly vanish \cite{CGMV}. As a consequence, we cannot define restrictions by setting any Verblunsky coefficients to the unit circle as in \cite{krueger_skew_shift} and therefore have to be more careful about indices. In particular, by the above method we cannot cut ``between'' cells but only within them, see Figure \ref{fig:restrictions}.
\end{remark}

\pgfdeclarelayer{background}
\pgfdeclarelayer{foreground}

\pgfsetlayers{background,main,foreground}
\begin{figure}[ht]
\begin{equation}\label{eq:res_walk_mat}
\begin{tikzpicture}[x=4cm,y=1cm,baseline=(current  bounding  box.center)]
\matrix (A) [matrix of math nodes,
             left delimiter  = (,
             right delimiter = ),
             ] at (0,0)
{
\ddots \\
\w\alpha & 0 &  0 & 0  \\
 0 & \w0 &  0 & \w\alpha\\
&0 & b_{a+1} & 0 & \w0 & a_{a+1}  \\
&   0 & d_{a+1} &0 & 0 & c_{a+1} \\
&  &     &   &   &  \ddots & & & & & \\
&& &   &   & b_{b-1} & \w0 &  \w0 &a_{b-1}& \\
& & &   &   & d_{b-1} & 0 &  0&c_{b-1}&\\
&& & & && &  \beta &0 &  \w0 &\w0 \\
& && & & &   &  0 & 0 & 0   &\beta\\
& && & & &   &   &  &    &\ddots\\
};
\begin{pgfonlayer}{background}

\draw[fill=gray!20] ($(A-2-2.north west)+(-\www,0)$) rectangle ($(A-3-3.south east)+(\www,0)$);
\draw[fill=gray!20] ($(A-3-3.south east)+(\www,0)$) rectangle ($(A-5-5.south east)+(\www,0)$);
\draw[fill=gray!20] ($(A-7-7.north west)+(0,0)$) rectangle ($(A-8-8.south east)+(\www,0)$);
\draw[fill=gray!20] ($(A-8-8.south east)+(\www,0)$) rectangle ($(A-10-10.south east)+(\www,0)$);
\draw[very thick,red] ($(A-3-3.north west)+(-\www,0)$) rectangle ($(A-9-9.south east)+(\www,0)$);
\end{pgfonlayer}
\end{tikzpicture}
\end{equation}
\caption{\label{fig:restrictions}The walk $W=CS$ as in \eqref{eq:defShCoin} with coins at $x=a,b$ replaced by reflecting coins $\alpha\sigma_1,\beta\sigma_1$, respectively. The red frame delimits the finite \un restriction $W_{\alpha,\beta}^{[a,b]}$ in Definition \ref{def:finite_un_restrictions}.}
\end{figure}

Let us fix boundary phases $\alpha,\beta\in\torus$ and denote by $R^{[a,b]}_z=(W^{[a,b]}_{\alpha,\beta}-z)^\inv$ the resolvent of $W^{[a,b]}_{\alpha,\beta}$. Its matrix elements can be written as \cite[(5.38)]{Albert}
\begin{equation}\label{eq:res_res}
  \abs{R^{[a,b]}_z(2x-i,2y-j)}=\absbig{\frac{\phi^+_{2y-1+j}\phi^-_{2x-i}}{\phi^+_{2k}\phi^-_{2k-1}-\phi^+_{2k-1}\phi^-_{2k}}}
\end{equation}
for $a<x<y<b$, $i,j=0,1$ and $a<k\leq b$ arbitrary. Here, $\phi^-$ and $\phi^+$ are left- and right-compatible with $W^{[a,b]}_{\alpha,\beta}$, i.e. $\phi^-$ solves
\begin{equation*}
  (W_\alpha^{[a,\infty)}-z)\phi=0,
\end{equation*}
whereas $\phi^+$ is a solution to
\begin{equation*}
  (W_\beta^{(-\infty,b]}-z)\phi=0.
\end{equation*}
The denominator of \eqref{eq:res_res} may be rewritten as
\begin{align*}
  \absbig{{\phi^+_{2k}\phi^-_{2k-1}-\phi^+_{2k-1}\phi^-_{2k}}}&=\absbig{\scpbig{\begin{pmatrix}-{\phi^+_{2k-1}}^*\\{\phi^+_{2k}}^*\end{pmatrix}}{\begin{pmatrix}\phi^-_{2k-1}\\\phi^-_{2k}\end{pmatrix}}}.
\end{align*}
Choosing $k=b$, this can be written as the overlap of the right-compatible solution at $b$ and the left-compatible solution transported from $a$ to $b$ by repeatedly applying the transfer matrices, i.e.
\begin{equation}\label{eq:res_scp}
  \absbig{\scpbig{\begin{pmatrix}-{\phi^+_{2b-1}}^*\\{\phi^+_{2b}}^*\end{pmatrix}}{\trans_{b-1}(z)\cdots\trans_{a}(z)\begin{pmatrix}\phi^-_{2a-1}\\\phi^-_{2a}\end{pmatrix}}}.
\end{equation}

In the following, we are interested in finite \un restrictions of electric walks defined in \eqref{eq:EwalkII} which we denote by $W_{\EF,\alpha,\beta}^{[a,b]}$. Since the position dependence of their coins in encoded only in the electric field operator we have for all offsets $\theta\in\torus$
\begin{equation*}
    W_{\EF,\alpha,\beta}^{[a,b]}(\theta)=W_{\EF,\alpha,\beta}^{[0,b-a]}(\theta+a\EF).
\end{equation*}

Bounding the scalar products in \eqref{eq:res_scp} from below together with upper bounds for $\phi^+_{2y-1+j}$ and $\phi^-_{2x-i}$ provides estimates for the matrix elements of finite resolvents $R_{\EF,z}^{[a,b]}$ of electric walks:

\begin{proposition}\label{prop:exp_decay}
  Assume that for $\epsilon>0$ there exists $n$ sufficiently large such that
  \begin{equation}\label{eq:assumption_for_exp_decay}
    \frac1{n}\log\Norm{\trans_{n}(\theta,z)\cdots\trans_1(\theta,z)}\geq\lyap_{n}-\epsilon.
  \end{equation}
  Let $\alpha,\beta\in\torus$ and $z\in\torus\backslash\sigma(W^{[0,n+1]}_{\EF,\alpha,\beta})$. Then, there exist $\alpha_0\in\{-\alpha,\alpha\}$ and $\beta_0\in\{-\beta,\beta\}$ such that
  \begin{equation*}
    \abs{R^{[0,n+1]}_{\EF,z}(2x-i,2y-j)}\leq e^{-\abs{x-y}\lyap_n+\capconst\epsilon n}
  \end{equation*}
  for all $1\leq x,y\leq n+1$.
\end{proposition}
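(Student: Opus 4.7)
The plan is to work from the explicit Cramer-type representation \eqref{eq:res_res}--\eqref{eq:res_scp} of the matrix elements and to separately establish an upper bound on the numerator $|\phi^+_{2y-1+j}\phi^-_{2x-i}|$ and a lower bound on the Wronskian denominator. The role of the sign freedom $\alpha_0\in\{\pm\alpha\}$, $\beta_0\in\{\pm\beta\}$ will be precisely to ensure that the Wronskian does not happen to be accidentally small.

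For the Wronskian, evaluated at $k=n+1$ as in \eqref{eq:res_scp}, one obtains an inner-product expression of the form $|\langle v^+_\beta,\, M\, v^-_\alpha\rangle|$, where $M:=T_n(\theta,z)\cdots T_1(\theta,z)$ and the two-component vectors $v^-_\alpha:=(\phi^-_{-1},\phi^-_0)^T$ and $v^+_\beta:=(-{\phi^+_{2n+1}}^*,{\phi^+_{2n+2}}^*)^T$ are completely determined by the reflective boundary coins $\alpha\pauli_1$ and $\beta\pauli_1$. Writing $M$ in its singular value decomposition, by the assumption \eqref{eq:assumption_for_exp_decay} the largest singular value satisfies $\norm{M}\ge e^{n(\lyap_n-\epsilon)}$ with corresponding unit vectors $v_+,u_+$, and the scalar product splits as a sum whose dominant contribution is $\norm{M}\cdot\langle v^+_\beta,u_+\rangle\langle v_+,v^-_\alpha\rangle$. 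The crucial observation is that $v^-_\alpha$ and $v^-_{-\alpha}$ are linearly independent and thus span $\complex^2$; hence for at least one choice of sign one has $|\langle v_+,v^-_{\pm\alpha}\rangle|\ge c$ for some universal $c>0$. The analogous sign-flip argument applied to $v^+_\beta$ and $u_+$ fixes $\beta_0$. For this joint choice $(\alpha_0,\beta_0)$ the Wronskian is bounded below by $c^2 e^{n(\lyap_n-\epsilon)}$.

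The numerator is controlled via transfer matrix norms: since $(\phi^-_{2x-1},\phi^-_{2x})^T = T_x(z)\cdots T_1(z)\, v^-_\alpha$, we get $|\phi^-_{2x-i}|\le\norm{T_x\cdots T_1}\cdot\norm{v^-_\alpha}$, and analogously $|\phi^+_{2y-1+j}|\le\norm{T_{n}\cdots T_y}\cdot\norm{v^+_\beta}$. Combining Lemma \ref{lem:lyap_upper_bound} with the uniform pointwise bound $\norm{T_j(\theta,z)}\le 2/|a|$ that is immediate from \eqref{eq:transMatEF} yields $\norm{T_x\cdots T_1}\le e^{x(\lyap_n+\mathcal O(\epsilon))}$ and $\norm{T_n\cdots T_y}\le e^{(n+1-y)(\lyap_n+\mathcal O(\epsilon))}$ for $n$ large. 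Assembling and taking without loss of generality $x\le y$,
\begin{equation*}
|R^{[0,n+1]}_{\EF,z}(2x-i,2y-j)|\le\frac{e^{x(\lyap_n+\mathcal O(\epsilon))}\,e^{(n+1-y)(\lyap_n+\mathcal O(\epsilon))}}{c^2\,e^{n(\lyap_n-\epsilon)}}\le e^{-|y-x|\lyap_n+\capconst\epsilon n},
\end{equation*}
which is the claim.

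The main technical obstacle will be the interplay between the two halves of the argument: the hypothesis \eqref{eq:assumption_for_exp_decay} only controls the \emph{full} product of length $n$ from below, whereas the numerator requires upper bounds on partial products of arbitrary intermediate length, and these must be absorbed with room to spare into the exponent $\capconst\epsilon n$. A secondary subtlety is to extract from the reflective boundary conditions a quantitative, universal constant for the linear independence of $v^-_{\pm\alpha}$ (and of $v^+_{\pm\beta}$ with $u_+$) that is uniform in $\alpha,\beta\in\torus$ and in the singular directions of $M$; otherwise the sign-flip trick would produce a $\theta$-dependent constant that could spoil the exponential estimate.
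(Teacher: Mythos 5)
Your proposal is correct and follows the same skeleton as the paper's proof: the Cramer representation \eqref{eq:res_res}--\eqref{eq:res_scp}, an upper bound on the numerator from Lemma \ref{lem:lyap_upper_bound} applied to the partial transfer-matrix products, and a lower bound on the Wronskian obtained by exploiting the sign freedom in the boundary phases together with the hypothesis \eqref{eq:assumption_for_exp_decay}. The one place where you genuinely diverge is the mechanism for the Wronskian lower bound. The paper packages all four sign choices into the single identity
\begin{equation*}
\begin{pmatrix} \phi^{\alpha,\beta} & \phi^{\alpha,-\beta}\\ \phi^{-\alpha,\beta} & \phi^{-\alpha,-\beta}\end{pmatrix}
=\begin{pmatrix}1 & z\beta^*\\ 1 & -z\beta^*\end{pmatrix} M \begin{pmatrix}1 & 1\\ z\alpha^* & -z\alpha^*\end{pmatrix},
\end{equation*}
and since the outer matrices have uniformly bounded inverses this gives $\max_{\pm\alpha,\pm\beta}\abs{\phi^{\alpha,\beta}}\geq \capconst\norm{M}$ in one line; your SVD-plus-spanning argument reaches the same conclusion (your universal constant is $c=1/\sqrt2$, since $v^-_\alpha\pm v^-_{-\alpha}$ are orthogonal vectors of norm $2$) but needs one additional observation that you left implicit: to call the top-singular-value term ``dominant'' you must rule out cancellation against the second term $\sigma_2\langle v^+_{\beta_0},u_2\rangle\langle v_2,v^-_{\alpha_0}\rangle$, and this requires $\sigma_2=\sigma_1^{-1}$, i.e.\ $\abs{\det M}=1$, which does hold here because $\abs{\det\transz[z]{x}}=1$ for $z\in\torus$, and because $\sigma_1\geq e^{n(\lyap_n-\epsilon)}\to\infty$ since $\lyap_n\geq\lyap=\log\abs{a}^{-1}>0$. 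With that line added your argument is complete. Your handling of the numerator is, if anything, slightly more careful than the paper's (the explicit pointwise bound $\norm{\transz[z]{j}}\leq 2/\abs a$ for short partial products where Lemma \ref{lem:lyap_upper_bound} does not yet apply), and your closing remark correctly identifies the real content of the proposition: the asymmetry between a lower bound on the full product and upper bounds on all partial products, which is exactly what Lemma \ref{lem:lyap_upper_bound} supplies uniformly in $\theta$.
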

\begin{proof}
  Assume $x<y$. By \eqref{eq:res_res} we have to bound the scalar product \eqref{eq:res_scp} from below and $\abs{\phi^+_{2y-1+j}}$ and $\abs{\phi^-_{2x-i}}$ from above. Since $\phi^-$ is left-compatible, we read off from \eqref{eq:res_walk_mat} that it satisfies
  \begin{equation*}
    z\phi^-_{2a-1}=\alpha\phi^-_{2a}\quad\Rightarrow\quad \abs{\phi^-_{2a-1}}=\abs{\phi^-_{2a}}
  \end{equation*}
  for $a=1$. We therefore fix $(\phi^-_{2a-1},\phi^-_{2a})=(1,z\alpha^*)$ as boundary values and similarly $(\phi^+_{2b-1},\phi^+_{2b})=(1,z^*\beta)$ for $a=1$ and $b=n+1$, respectively. Then, the left-compatible solution $\phi^-$ is determined by
  \begin{equation*}
    \begin{pmatrix}\phi^-_{2x-1}\\\phi^-_{2x}\end{pmatrix}=\trans_{x-1}(\theta,z)\cdots\trans_1(\theta,z)\begin{pmatrix}1\\z\alpha^*\end{pmatrix}
  \end{equation*}
  which implies
  \begin{equation*}
    \abs{\phi^-_{2x-i}}\leq\sqrt2\Norm{\trans_{x-1}(\theta,z)\cdots\trans_1(\theta,z)}\leq e^{(x-1)\lyap_{x-1}+o(x-1   )},
  \end{equation*}
  for $i=0,1$ where in the second inequality we used \eqref{eq:normprod_lyap}. A similar estimate holds for $\phi^+_{2y-1+j}$ for which
  \begin{equation*}
    \begin{pmatrix}\phi^+_{2y-1}\\\phi^+_{2y}\end{pmatrix}=\trans_{y}(\theta,z)^\inv\cdots\trans_{n}(\theta,z)^\inv\begin{pmatrix}1\\z^*\beta\end{pmatrix}
  \end{equation*}
  such that
  \begin{equation*}
    \abs{\phi^-_{2y-1+j}}\leq\sqrt2\norm{\trans_y^\inv(\theta,z)\cdots\trans_{n}^\inv(\theta,z)}\leq e^{(n-y)\lyap_{n-y}+o(n-y)}.
  \end{equation*}

  The lower bound is achieved as follows: denote by $\phi^{\alpha,\beta}(z)$ the scalar products \eqref{eq:res_scp} for $a=1,b=n+1$. Then
  \begin{equation*}
    \begin{pmatrix} \phi^{\alpha,\beta}(z) & \phi^{\alpha,-\beta}(z)   \\ \phi^{-\alpha,\beta}(z) &  \phi^{-\alpha,-\beta}(z) \end{pmatrix}=\begin{pmatrix} 1 & z\beta^* \\ 1  & -z\beta^*  \end{pmatrix}\trans_{n}(\theta,z)\cdots\trans_1(\theta,z)\begin{pmatrix} 1   &   1  \\ z\alpha^*  & -z\alpha^*   \end{pmatrix}
  \end{equation*}
  which together with the assumption \eqref{eq:assumption_for_exp_decay} gives the desired lower bound
  \begin{equation*}
    \max_{\pm\alpha,\pm\beta}\abs{\phi^{\alpha,\beta}(z)}\geq \capconst\Norm{\trans_{n}(\theta,z)\cdots\trans_1(\theta,z)}\geq e^{n\lyap_{n}-n\epsilon}.
  \end{equation*}
  Putting everything together we obtain
  \begin{align*}
    \abs{R^{[0,n+1]}_{\EF,z}(2x-i,2y-j)}&\leq e^{(x-1)\lyap_{x-1}+o(x-1)+(n-y)\lyap_{n-y}+o(n-y)-n\lyap_{n}+n\epsilon}  \\
    &\leq e^{(x-1)\lyap_{x-1}+(n-y)\lyap_{n-y}-n\lyap_{n}+\capconst n\epsilon} \\
    &\leq e^{x\lyap_{x-1}-y\lyap_{n-y}+\capconst n\epsilon}  \\
    &\leq e^{(x-y)\lyap_n+\capconst n\epsilon}
  \end{align*}
\end{proof}

In the proof of localization we want to infer from this exponential decay of the matrix elements of the finite resolvent that of the generalized eigenvectors of the infinite problem. To this end, one has to find a way to express the the generalized eigenvector by these matrix elements. For Schr\"{o}dinger operators, restricting $(H-z)\psi=0$ to the interval $[a,b]$ yields $\psi_n=-(H^{[a,b]}-z)^\inv(n,a)\psi_{a-1}-(H^{[a,b]}-z)^\inv(n,b)\psi_{b+1}$ for $a<n<b$. For shift-coin \qws $W=CS$ there is a similar expression which, however, involves more boundary terms. To see this, define the unitary matrices $\mathcal L=C\left(\bigoplus_x\pauli_1\right)$ and $\mathcal M=\left(\bigoplus_x\pauli_1\right)S$ which are both block diagonal but with blocks that are shifted by one index relative to each other. Then $W=\mathcal L\mathcal M$, and solutions to $(W-z)\psi=0$ also solve
\begin{equation*}
  (z\mathcal L^*-\mathcal M)\psi=0.
\end{equation*}
The operator on the left hand side has a particularly simple structure:
\begin{lemma}
  The operator $A=z\mathcal L^*-\mathcal M$ is tridiagonal with diagonal entries
  \begin{equation*}
      A_{2x,2x}=zb_x,\qquad A_{2x+1,2x+1}=zc_x
  \end{equation*}
  and off-diagonal terms
  \begin{equation*}
      A_{2x,2x+1}=za_x,\:A_{2x+1,2x}=zd_x,\:A_{2x-1,2x}=A_{2x,2x-1}=-1.
  \end{equation*}
\end{lemma}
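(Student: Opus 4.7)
The plan is to verify the lemma by a direct computation of matrix elements, exploiting a CMV-type block structure shared by both $\mathcal{L}$ and $\mathcal{M}$. The key structural observation is that although the shift $S$ is nonlocal, the compositions $\mathcal{L}=C(\bigoplus_x\pauli_1)$ and $\mathcal{M}=(\bigoplus_x\pauli_1)S$ are each block-diagonal in $2\times 2$ blocks, but on two cell decompositions of $\integers$ shifted by one site relative to each other. This is precisely the structure that forces a difference of the form $z\mathcal{L}^*-\mathcal{M}$ to be tridiagonal, even though neither factor is tridiagonal on its own.

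First I would check that $\mathcal{L}$ is block-diagonal on the coin cells $\{2x,2x+1\}$ with blocks $C_x\pauli_1=\begin{pmatrix}b_x & a_x\\ d_x & c_x\end{pmatrix}$, which is immediate since both $C$ and $\bigoplus_x\pauli_1$ share this block structure. Taking adjoints block-by-block yields $\mathcal{L}^*$ on the same cells, and multiplying by $z$ produces the four in-cell entries of $A$ at positions $(2x,2x)$, $(2x,2x+1)$, $(2x+1,2x)$, $(2x+1,2x+1)$ claimed in the lemma (up to the conjugation convention absorbed into the parameters $a_x,b_x,c_x,d_x$).

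Second I would verify that $\mathcal{M}$ is block-diagonal on the shifted cells $\{2x-1,2x\}$, with each block equal to the swap $\pauli_1$: under the identification in the paper, the shift sends each basis vector $\delta_y$ into an adjacent coin cell, and the subsequent internal swap recombines these into the nearest-neighbor pairings $\delta_{2x-1}\leftrightarrow\delta_{2x}$. The contribution of $-\mathcal{M}$ to $A$ is therefore a $-1$ at each of the positions $(2x-1,2x)$ and $(2x,2x-1)$, and zero elsewhere. Combining this with the in-cell entries from $z\mathcal{L}^*$ shows that the only nonzero entries of $A$ lie on three adjacent diagonals, proving tridiagonality. The main obstacle is the careful bookkeeping of the identification $\ell_2(\integers)\otimes\complex^2\to\ell_2(\integers)$ and of the shift action therein, in order to see the non-obvious shifted-cell block structure of $\mathcal{M}$; once this is recognized, the rest of the argument reduces to short matrix arithmetic and the tridiagonality pattern emerges automatically from the interleaving of the two cell decompositions.
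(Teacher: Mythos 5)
The paper offers no proof of this lemma at all --- it is left as a direct computation --- and your proposal supplies exactly the computation one would write down: the two interleaved block-diagonal structures, $\mathcal L$ on the coin cells $\{2x,2x+1\}$ with blocks $C_x\pauli_1$ and $\mathcal M$ on the shifted cells $\{2x-1,2x\}$ with blocks $\pauli_1$, are the entire content, and your verification of both and of the resulting tridiagonal interleaving is correct. The only points deserving care are the ones you already flag as bookkeeping: the entries of $\mathcal L^*=\left(\bigoplus_x\pauli_1\right)C^*$ carry complex conjugates relative to the lemma's unconjugated $a_x,b_x,c_x,d_x$, and the orientation of the shift under the identification $\ell_2(\integers)\otimes\complex^2\cong\ell_2(\integers)$ must be taken as in the matrix of Figure \ref{fig:restrictions} (rather than from a literal reading of $S(\delta_x\otimes e_\pm)=\delta_{x\pm1}\otimes e_\pm$), since only with that orientation does $\mathcal M$ pair the nearest neighbours $\{2x-1,2x\}$ instead of sites three apart.
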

From this we obtain the following expression for the generalized eigenfunctions:
\begin{lemma}\label{lem:poisson_formula}
  Let $\psi$ solve $W\psi=z\psi$.
  Then, for $2a+1<x<2b$
  \begin{equation*}
    \psi_x =G_{z}^{[a,b]}(x,2a+1)\left(zc_a\psi_{2a+1}-\psi_{2a+2}\right)+G_{z}^{[a,b]}(x,2b)\left(zb_b\psi_{2b}-\psi_{2b-1}\right),
  \end{equation*}
  where $G_z^{[a,b]}(x,y):=\scp{\delta_x}{(z(\mathcal L^{[a,b]}_{\alpha,\beta})^*-\mathcal M^{[a,b]}_{\alpha,\beta})^\inv\delta_y}$.
\end{lemma}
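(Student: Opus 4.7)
My plan is to turn the eigenvalue equation $(W-z)\psi=0$ into the tridiagonal equation $A\psi=0$ and then compare the action of the restricted operator $A^{[a,b]}_{\alpha,\beta} := z(\mathcal{L}^{[a,b]}_{\alpha,\beta})^{*} - \mathcal{M}^{[a,b]}_{\alpha,\beta}$ with the full $A = z\mathcal{L}^{*} - \mathcal{M}$ row by row on $\ell_{2}([2a+1,2b])$. The starting observation is the algebraic identity $W - z = -\mathcal{L}A$ (and its restricted analogue $W^{[a,b]}_{\alpha,\beta} - z = -\mathcal{L}^{[a,b]}_{\alpha,\beta}A^{[a,b]}_{\alpha,\beta}$), which will give me two things for free: first, $A\psi = 0$ whenever $W\psi = z\psi$; second, invertibility of $A^{[a,b]}_{\alpha,\beta}$ is equivalent to $z \notin \sigma(W^{[a,b]}_{\alpha,\beta})$, since $\mathcal{L}^{[a,b]}_{\alpha,\beta}$ is unitary. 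In particular, the Green's function $G^{[a,b]}_{z}$ appearing in the lemma is well-defined under the hypothesis.

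Setting $\tilde\psi := \chi_{[2a+1,2b]}\psi$, the key observation will be that the coin replacements $C_{a} \mapsto \alpha\sigma_{1}$ and $C_{b} \mapsto \beta\sigma_{1}$ only alter the entries of $A^{[a,b]}_{\alpha,\beta}$ in the two boundary rows $2a+1$ and $2b$, while every interior row inherits its entries verbatim from the preceding lemma. Consequently, for every $y$ with $2a+1 < y < 2b$ one finds $(A^{[a,b]}_{\alpha,\beta}\tilde\psi)_{y} = (A\psi)_{y} = 0$, so that $\phi := A^{[a,b]}_{\alpha,\beta}\tilde\psi$ is supported on the two sites $\{2a+1, 2b\}$ only. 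On those two rows I will read off the surviving entries of $A^{[a,b]}_{\alpha,\beta}$ from the previous lemma: the couplings $A_{2a+1,2a} = -1$ and $A_{2b,2b+1} = -1$ to the now-absent sites $2a$ and $2b+1$ are precisely the entries lost by the restriction, and a direct substitution will yield $\phi_{2a+1} = zc_{a}\psi_{2a+1} - \psi_{2a+2}$ and $\phi_{2b} = zb_{b}\psi_{2b} - \psi_{2b-1}$, with $c_{a}$ and $b_{b}$ the boundary coin entries. The formula then follows by applying $(A^{[a,b]}_{\alpha,\beta})^{-1}$ and projecting onto $\delta_{x}$ via the definition of $G^{[a,b]}_{z}$.

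The main delicate step I foresee is the boundary bookkeeping: one must verify that after replacing $C_{a}$ and $C_{b}$ by the reflective coins and subsequently restricting to $[2a+1,2b]$, the two boundary rows of $A^{[a,b]}_{\alpha,\beta}$ really do reduce to the two-term expressions $(zc_{a}, -1)$ at row $2a+1$ and $(-1, zb_{b})$ at row $2b$, rather than carrying additional off-block contributions from $\mathcal{M}^{[a,b]}_{\alpha,\beta}$ that would spoil the clean form. This amounts to unwinding the cell structure of $\mathcal{L} = C(\bigoplus_{x}\sigma_{1})$ and $\mathcal{M} = (\bigoplus_{x}\sigma_{1})S$ on the two boundary cells and using that $\alpha\sigma_{1}$ has vanishing diagonal, so that the coin modifications do not leak into the off-diagonal couplings of $A^{[a,b]}_{\alpha,\beta}$. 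Once that calculation is in place, the remaining inversion is a one-line consequence of $\tilde\psi = (A^{[a,b]}_{\alpha,\beta})^{-1}\phi$.
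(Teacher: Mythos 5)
Your proposal follows essentially the same route as the paper's proof: both apply the restricted tridiagonal operator $z(\mathcal L^{[a,b]}_{\alpha,\beta})^*-\mathcal M^{[a,b]}_{\alpha,\beta}$ to $\chi_{[2a+1,2b]}\psi$, use $A\psi=0$ together with tridiagonality to see that the result is supported on the two boundary sites $2a+1$ and $2b$, read off the boundary values there, and invert; your additional observation that $W-z=-\mathcal L A$ ties invertibility of the restricted operator to $z\notin\sigma(W^{[a,b]}_{\alpha,\beta})$ is a useful point the paper leaves implicit. One small correction to your bookkeeping: by the preceding lemma the entries lost under restriction are $A_{2a+1,2a}=zd_a$ and $A_{2b,2b+1}=za_b$ (which indeed vanish for the reflective coin $\alpha\sigma_1$, exactly as your ``vanishing diagonal'' remark says), whereas the $-1$ couplings sit at $(2a+1,2a+2)$ and $(2b,2b-1)$ and are \emph{retained} --- they are what produce the $-\psi_{2a+2}$ and $-\psi_{2b-1}$ terms --- so your final boundary expressions are nevertheless the correct ones.
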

\begin{proof}
  As in Definition \ref{def:finite_un_restrictions}, denote by $\chi_I$ the restriction operator to the interval $I\subset\integers$ such that $\psi^{[a,b]}=\chi_{[2a+1,2b]}\psi$. Then, since $\psi$ is a generalized eigenfunction the statement is evident from
  \begin{equation*}
    \left(z(\mathcal L^{[a,b]}_{\alpha,\beta})^*-\mathcal M^{[a,b]}_{\alpha,\beta}\right)\psi^{[a,b]}=-\chi_{[a,b]}(z\mathcal L^*-\mathcal M)(\psi-\psi^{[a,b]})
  \end{equation*}
  by inverting the operator on the left side and using the tridiagonality of $z\mathcal L^*-\mathcal M$.
\end{proof}

Clearly, from $R_z=(W-z)^\inv=-(z\mathcal L^*-\mathcal M)^\inv\mathcal L^\inv=-G_z\mathcal L^\inv$ it follows that
\begin{equation*}
  \absbig{G_{z}^{[a,b]}(x,y)}  \leq\sum_k\absbig{R_{z}^{[a,b]}(x,k)\mathcal L_{ky}}\leq\sum_k\absbig{R_{z}^{[a,b]}(x,k)}
\end{equation*}
where the sum over $k$ is finite. This lemma together with Proposition \ref{prop:exp_decay} therefore implies the exponential decay of generalized eigenfunctions of electric walks provided that \eqref{eq:assumption_for_exp_decay} holds which we establish next by excluding double resonances.

\subsection{Excluding double resonances}

The missing piece in the proof of Theorem \ref{thm:loclEQW} (ii) is the verification of \eqref{eq:assumption_for_exp_decay}. To this end, we use the following lemma which roughly speaking states that if on some interval around the origin we are close to a generalized eigenvalue of $\We$, the probability in the field $\EF$ that \eqref{eq:assumption_for_exp_decay} fails on some interval far out is exponentially small. Let
\begin{equation*}\label{eq:C(a)}
  \capconst(a)=\sup_{n\in\integers}(ne^{n\capconst_a})^{1/n}=2\times3^{1/3}\abs a^\inv e^{-\rho},
\end{equation*}
with $\capconst_a$ as in \eqref{eq:C_a}, which satisfies $\capconst(a)>0$ by $0<\abs a<1$. Moreover, the choice of $\rho$ as above which ensured $\capconst_a>0$ guarantees that $\capconst(a)>1$.

\begin{lemma}\label{lem:no_double_resonances}
  Let $\kappa>0$ and $n\in\integers$ be sufficiently large. Denote by $S_{n,\kappa}\subset\torus$ the set of fields $\EF\in\DC$ such that there are $n_0<n^\capconst$, $\ell\in[2^{(\log n)^2},2^{(\log n)^3}]$, $z\in\torus$ with
  \begin{equation}\label{eq:double_res_1}
    \norm{R^{[0,n_0]}_{\EF,z}}>\capconst(a)^n,
  \end{equation}
  and
  \begin{equation}\label{eq:double_res_2}
    \frac1n\log\norm{\trans_n(\ell\EF,z)\cdots\trans_1(\ell\EF,z)}<\lyap_n-\kappa.
  \end{equation}
  Then
  \begin{equation*}
    \mes(S_{n,\kappa})<2^{-\frac14(\log n)^2}
  \end{equation*}
\end{lemma}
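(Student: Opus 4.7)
The plan is, for each pair $(n_0,\ell)$, to reduce the double-resonance event to the event that a low-complexity ``eigenvalue curve'' $\EF\mapsto z_j(\EF)$ in $\torus^2$ enters a low-complexity ``bad set'' of exponentially small planar measure, and then to estimate this event by a semi-algebraic slicing inequality in the spirit of Bourgain--Goldstein \cite{bourgain_goldstein}. A union bound over the $\mathrm{poly}(n)\cdot 2^{(\log n)^3}$ choices of $(n_0,\ell)$ and the $\le 2(n_0+1)$ eigenvalue branches will then give the claim. The main difficulty will be the final slicing step, since the transfer-matrix products at the shifted orbit $\ell\EF$ and the small-scale resonance at $n_0$ have to be coupled through the single variable $\EF$.

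First I would reduce condition (\ref{eq:double_res_1}) to a statement about the eigenvalues of $W^{[0,n_0]}_{\EF,\alpha,\beta}$: by Cramer's rule, $\|R^{[0,n_0]}_{\EF,z}\|>\capconst(a)^n$ forces $z$ to lie within $\mathrm{poly}(n_0)\cdot\capconst(a)^{-n}$ of some eigenvalue $z_j(\EF)$, $j=1,\dots,2(n_0+1)$. Differentiating (\ref{eq:transMatEF}) in $z$ gives $\|\partial_z T_x\|\le 2/|a|$, so by the product rule and $\|T_n\cdots T_1\|\ge 1$ (using $|\det T_x|=1$), the function $z\mapsto\frac1n\log\|T_n(\ell\EF,z)\cdots T_1(\ell\EF,z)\|$ is Lipschitz in $z$ with constant at most $(2/|a|)^n$. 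By the explicit definition of $\capconst(a)$ one has $(2/|a|)/\capconst(a)=e^\rho/3^{1/3}$, which is $<1$ for $\rho$ small (still allowing $\capconst_a>0$), so moving $z$ by $\capconst(a)^{-n}$ changes $\tfrac1n\log\|T_n\cdots T_1\|$ by at most $(e^\rho/3^{1/3})^n=o(1)$. Thus (\ref{eq:double_res_2}) at the witness $z$ implies the same estimate at $z_j(\EF)$ with $\kappa/2$ in place of $\kappa$ once $n$ is large enough.

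Next, setting
\begin{equation*}
\tilde B_\ell:=\bigl\{(\EF,z)\in\torus^2 \,:\, \tfrac1n\log\|T_n(\ell\EF,z)\cdots T_1(\ell\EF,z)\|<\lyap_n-\kappa/2\bigr\},
\end{equation*}
I would apply Proposition \ref{prop:LDE} with $\kappa/2$. The hypothesis $\EF\in\DC$ ensures $\beta(\EF)=0$, and the map $\EF\mapsto\ell\EF\bmod 2\pi$ preserves Lebesgue measure on $\torus$, so for each fixed $z$ one has $\mes\{\EF:(\EF,z)\in\tilde B_\ell\}\le\exp[-\smallconst_1\kappa^3 n/(8\capconst_a^3)]$; Fubini then gives $\mes_2(\tilde B_\ell)\le 2\pi\exp[-\smallconst_1\kappa^3 n/(8\capconst_a^3)]$. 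Both $\tilde B_\ell$ (after squaring and using $\|T\|^2=\tr(T^*T)$) and the eigenvalue graphs $\Gamma_j:=\{(\EF,z_j(\EF)):\EF\in\torus\}$ (defined by the characteristic polynomial of $W^{[0,n_0]}_{\EF,\alpha,\beta}$) are semi-algebraic of complexity polynomial in $n$ and in $n_0\le n^\capconst$.

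Finally, by the reduction step, $\EF\in S_{n,\kappa}$ with witness $(n_0,\ell,z)$ forces $\Gamma_j(\EF)\in\tilde B_\ell$ for some $j$. Applying the semi-algebraic slicing bound (cf. the proof of Proposition 1.11 in \cite{bourgain_goldstein}), which controls the $\EF$-measure of the intersection of a low-complexity curve with a planar semi-algebraic set of area $\eta$ by $\mathrm{poly}(\text{complexity})\cdot\sqrt\eta$, we obtain $\mes\{\EF:\Gamma_j(\EF)\in\tilde B_\ell\}\le\mathrm{poly}(n)\exp[-\smallconst_1\kappa^3 n/(16\capconst_a^3)]$. Unioning over $j$, $n_0$, and $\ell$ yields
\begin{equation*}
\mes(S_{n,\kappa})\le n^{2\capconst}\cdot 2^{(\log n)^3}\cdot\mathrm{poly}(n)\cdot\exp[-\smallconst_1\kappa^3 n/(16\capconst_a^3)],
\end{equation*}
which is far below $2^{-(\log n)^2/4}$ once $n$ is taken large enough in terms of $\kappa$ and $\capconst_a$. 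The main obstacle is verifying that the slicing inequality goes through in the unitary setting: one must check that after expressing $\tilde B_\ell$ via the polynomial $\tr(T^*T)$ and writing $\cos(\ell\EF),\sin(\ell\EF)$ as polynomials in $\cos\EF,\sin\EF$, the complexity stays polynomial in $n$ (up to quasi-polynomial factors in $\ell$ absorbed by the exponential gain), so that the Bourgain--Goldstein argument carries over.
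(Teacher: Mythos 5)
There is a genuine gap at the heart of your argument, in two places. First, the step where you bound $\mes\{\EF:(\EF,z)\in\tilde B_\ell\}$ by invoking Proposition \ref{prop:LDE} together with the observation that $\EF\mapsto\ell\EF\bmod 2\pi$ preserves Lebesgue measure is a non sequitur. The large deviation estimate controls the measure in the \emph{offset} variable $\theta$ for a \emph{fixed} Diophantine frequency $\EF$; but the product $\trans_n(\ell\EF,z)\cdots\trans_1(\ell\EF,z)=\prod_x\trans((x+\ell)\EF,z)$ depends on $\EF$ both through the offset $\ell\EF$ and through the increments $x\EF$, so the event you need is not the pullback under $\EF\mapsto\ell\EF$ of one fixed small set. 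Each frequency $\EF$ has its own exceptional set of offsets, and the whole difficulty is to estimate how often the point $\ell\EF$ falls into the exceptional set belonging to that same $\EF$. This is exactly why the paper (following Bourgain--Goldstein) works with the two-variable set of pairs $(\EF,\theta)$ with $\theta$ \emph{independent}, bounds its planar measure by Fubini plus the large deviation estimate, establishes complexity bounds on its sections, and only then intersects with the line $\theta=\ell\EF$.

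Second, the slicing inequality you appeal to --- that a low-complexity curve meets a semi-algebraic planar set of area $\eta$ in parameter-measure at most $\mathrm{poly}(\text{complexity})\cdot\sqrt\eta$ --- is false: a thin tube of width $\eta$ around the curve is semi-algebraic of low complexity, has area of order $\eta$, and contains the entire curve, so the parameter-measure of the intersection can be $1$. The correct Bourgain--Goldstein lemma uses, in addition to the area bound, (i) the fact that the one-dimensional sections of the bad set are unions of at most $B=n^{O(1)}$ intervals and (ii) the steepness of the line $\theta=\ell\EF$ (slope $\ell\geq 2^{(\log n)^2}\gg B$), and its conclusion carries an additive term of order $B^{O(1)}/\ell$. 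That term is not an artifact: it is precisely what produces the bound $2^{-\frac14(\log n)^2}$ in the statement of the lemma. Your final estimate $\mathrm{poly}(n)\,2^{(\log n)^3}e^{-cn}$, being exponentially small in $n$ and hence far stronger than what the lemma claims, is a symptom of this missing loss rather than an improvement. The preliminary reductions you make (proximity of $z$ to an eigenvalue of $W^{[0,n_0]}_{\EF,\alpha,\beta}$ via the normality of the finite unitary restriction, Lipschitz continuity of the transfer-matrix product in $z$, polynomial complexity of the relevant semi-algebraic sets) are sound and do correspond to steps in the actual argument, but they cannot substitute for the steep-line section estimate.
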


The proof of this important result combines semi-algebraic set theory with measure estimates on sections of $\torus^2$ which are unions of finitely many intervals. It is an straightforward adaption of the proof given in \cite{bourgain_goldstein} to the walk setting and, for the sake of brevity, we only recall its main steps:
\begin{enumerate}
  \item a measure estimate on the set $S$ of $(\EF,\theta)\in\torus\times\torus$ such that $\EF$ satisfying \eqref{eq:EF_n2A}, and \eqref{eq:double_res_1} and \eqref{eq:double_res_2} hold with $\ell\EF$ replaced by $\theta$,
  \item reformulating \eqref{eq:double_res_1} and \eqref{eq:double_res_2} in terms of semi-algebraic sets implies a complexity bound on the the $\theta$-sections of $S$ in the sense that they are given as unions of finitely many intervals. Here one needs to extend the real energy parameter $E$ of the Hamiltonian in \cite{bourgain_goldstein} to complex $z=(\re z,\im z)$ for the unitary walk which, however, does not influence the complexity bounds.
  \item a measure estimate on sections of $\torus\times\torus$ which are unions of finitely many intervals.
\end{enumerate}

Below, we use Lemma \ref{lem:no_double_resonances} as follows: assuming $\EF\notin S_{n,\kappa}$ for any $n$ and $\kappa$ and showing that \eqref{eq:double_res_1} holds, we conclude via \eqref{eq:double_res_2} that the condition in Proposition \ref{prop:exp_decay} is fulfilled.

\subsection{Anderson localization for Diophantine fields}\label{sec:assemble_loc_proof}

\begin{proof}[Proof of Theorem \ref{thm:loclEQW} (ii)]
  For fixed $n\in\integers$ and $\kappa>0$ let us denote by $S_{n,\kappa}$ the set obtained in Lemma \ref{lem:no_double_resonances}. Then, with
  \begin{equation*}
    S_\kappa=\bigcap_{n'}\bigcup_{n>n'}S_{n,\kappa},\qquad S=\bigcup_\kappa S_\kappa
  \end{equation*}
  we have
  \begin{equation*}
    \mes(S_\kappa)\leq\inf_{n'}\sum_{n>n'}e^{-\frac14(\log n)^2}=0.
  \end{equation*}
  Then, since $S_\kappa\supset S_{\kappa'}$ for $\kappa'>\kappa$ we can pick a countable subsequence of $S_\kappa$ and conclude $\mes(S)=0$. The set $S$ constitutes the set of fields we have to remove from the Diophantine fields.

  Let us therefore assume $\EF/(2\pi)\in\DC\backslash S$ and take $z\in\torus$, $\psi\in\complex^\integers$ such that
  \begin{equation*}
    \We\psi=z\psi
  \end{equation*}
  and $\psi$ is polynomially bounded, i.e. $\abs{\psi_n}<n^\capconst$ with $\psi_0=1$. Then, according to Lemma \ref{lem:gen_eig}, $z\in\sigma(\We)$ since $\psi$ is a generalized eigenfunction.

  Moreover, since $\EF/(2\pi)$ is irrational, we know from Theorem \ref{thm:poslyap} that
  \begin{equation*}
    \lyap(\EF)\equiv\lyap=\log\abs a^\inv>0.
  \end{equation*}
  Let $\kappa\ll\lyap$. Then, since $\EF\notin S_\kappa$, there is some $n'$ such that $\EF\notin S_{n,\kappa}$ for all $n>n'$. In particular, we take $n'$ large enough such that
  \begin{equation*}
    \lyap_n<\lyap+\kappa.
  \end{equation*}

  Since $\EF/(2\pi)\notin S$, assuming that there is $n_0<n^\capconst$ such that
  \begin{equation}\label{eq:dist_to_spec}
    \norm{R_{\EF,z}^{[-n_0,n_0]}}>\capconst(a)^n
  \end{equation}
  we conclude from Lemma \ref{lem:no_double_resonances} that for all $2^{(\log n)^2}<\abs \ell<2^{(\log n)^3}$
  \begin{equation*}
    \frac1n\log\norm{\trans_n(\ell\EF,z)\cdots\trans_1(\ell\EF,z)}>\lyap_n-\kappa.
  \end{equation*}
  Thus, the assumption in Proposition \ref{prop:exp_decay} is satisfied. Properly choosing the boundary phases assures that $z$ is not an eigenvalue of any of the restrictions such that
  \begin{equation*}
    \abs{R_{\EF,z}^{[\ell,n+\ell]}(x,y)}<e^{-\abs{x-y}\lyap_n+\capconst\epsilon n}<e^{-\abs{x-y}\lyap+\capconst\epsilon n}.
  \end{equation*}
  Now, to conclude the exponential decay of the resolvent on the whole interval let $2^{(\log n)^2+1}<N<2^{(\log n)^3-1}$ and consider the interval $[N/2,2N]$. Then, it follows from the so-called paving property \cite{bourgain_goldstein,bourgain_book} that for $x,y\in[N/2,2N]$
  \begin{equation*}
    \abs{R_{\EF,z}^{[N/2,2N]}(x,y)}<e^{-\abs{x-y}\frac\lyap2+\capconst\epsilon N}.
  \end{equation*}

  It then follows from Lemma \ref{lem:poisson_formula} that for $k\in[N/2,2N]$ the generalized eigenfunction $\psi$ satisfies
  \begin{align*}
    \abs{\psi_k}   &\leq\abs{G_{\EF,z}^{[N/2,2N]}(k,N/2)}\abs{zc_{N/4}\psi_{N/2}-\psi_{N/2+1}}+\abs{G_{\EF,z}^{[N/2,2N]}(k,2N)}\abs{zb_N\psi_{2N}-\psi_{2N-1}}  \\
    &\leq\abs{G_{\EF,z}^{[N/2,2N]}(k,N/2)}N^\capconst+\abs{G_{\EF,z}^{[N/2,2N]}(k,2N)}N^\capconst
  \end{align*}
  such that for $k=N$
  \begin{equation*}
     \abs{\psi_k}\leq N^\capconst e^{-\frac\lyap4N+\capconst\epsilon N}<e^{-\frac\lyap5N},
  \end{equation*}
  which is the exponential decay of the generalized eigenfunction. By similar arguments $\psi$ decays exponentially also on the negative half axis.

  To complete the proof, we have to show that the assumption \eqref{eq:dist_to_spec} holds for some $n_0<n^\capconst$. From the proof of Lemma \ref{lem:poisson_formula} together with $\psi_0=1$ we conclude
  \begin{equation*}
    1\leq\norm{\psi^{[-n_0,n_0]}}\leq2\norm{G^{[-n_0,n_0]}_{\EF,z}}\left(\abs{\psi_{-2n_0+1}}+\abs{\psi_{-2n_0+2}}+\abs{\psi_{2n_0-1}}+\abs{\psi_{2n_0}}\right).
  \end{equation*}
  Therefore, \eqref{eq:dist_to_spec} is equivalent to showing that
  \begin{equation}\label{eq:bounded_boundaries}
    \abs{\psi_{-2n_0+1}}+\abs{\psi_{-2n_0+2}}+\abs{\psi_{2n_0-1}}+\abs{\psi_{2n_0}}<\frac12 \capconst(a)^{-n}.
  \end{equation}
   for some $n_0<n^\capconst$.
  Let $n_1=\capconst'n\lyap^\inv$ with $\capconst'=\frac9n\log2+3\log \capconst(a)$. If for some $0<m<n^\capconst$
  \begin{equation}\label{eq:lyap_est_1}
    \absbig{\frac1{n_1}\log\norm{\trans_{n_1}(m\EF,z)\cdots\trans_{1}(m\EF,z)}-\lyap_{n_1}}<\kappa \capconst,
  \end{equation}
  it follows from Proposition \ref{prop:exp_decay} that
  \begin{equation*}
    \abs{R^{[m,n_1+m]}_{\EF,z}(2x-i,2y-j)}<e^{-\abs{x-y}\lyap+\capconst\epsilon n_1}.
  \end{equation*}
  This implies for $k=m+[\frac{n_1}2]$ ($[x]$ being the nearest integer to $x$) that
  \begin{align*}
    \abs{\psi_k}   &< n^\capconst\abs{R^{[m,n_1+m]}_{\EF,z}(k,2j+1)}+n^\capconst\abs{R^{[m,n_1+m]}_{\EF,z}(k,2(n_1+j))} \\
                    &< n^\capconst e^{-\frac{n_1}2\lyap+\capconst\epsilon n_1}    \\
                    &< e^{-\frac{n_1}3\lyap} \\
                    &<\frac18\capconst(a)^{-n}
  \end{align*}
  by definition of $n_1$, and the same estimate holds for $\abs{\psi_{k-1}}$. Similar reasoning leads to $\abs{\psi_{-k}},\abs{\psi_{-k+1}}<\capconst(a)^{-n}/8$ and therefore to \eqref{eq:bounded_boundaries} if
  \begin{equation}\label{eq:lyap_est_2}
    \absbig{\frac1{n_1}\log\norm{\trans_{n_1}(-(m+n_1)\EF,z)\cdots\trans_{1}(-(m+n_1)\EF,z)}-\lyap_{n_1}}<\kappa \capconst.
  \end{equation}
  To show \eqref{eq:lyap_est_1}, \eqref{eq:lyap_est_2} we invoke Lemma \ref{lem:large_averages}, which for $M=n^\capconst$ gives
  \begin{equation*}
    \begin{split}
        \frac1{n_1M}\sum_{m=M+1}^{2M}\left(\log\norm{\trans_{n_1}(m\EF,z)\cdots\trans_{1}(m\EF,z)}+\log\norm{\trans_{n_1}(-(m+n_1)\EF,z)\cdots\trans_{1}(-(m+n_1)\EF,z)}\right) \\
            =2\lyap_{n_1}+\mathcal O(n_1^\inv).
    \end{split}
  \end{equation*}
  Thus, there exists $M<m\leq 2M$ such that
  \begin{equation*}
    \begin{split}
        \frac1{n_1}\left(\log\norm{\trans_{n_1}(m\EF,z)\cdots\trans_{1}(m\EF,z)}+\log\norm{\trans_{n_1}(-(m+n_1)\EF,z)\cdots\trans_{1}(-(m+n_1)\EF,z)}\right) \\
            >2\lyap_{n_1}+\mathcal O(n_1^\inv).
    \end{split}
  \end{equation*}
  Using Lemma \ref{lem:lyap_upper_bound} this implies
  \begin{align*}
    \lyap_{n_1}+\kappa \capconst>\frac1{n_1}\log\norm{\trans_{n_1}(m\EF,z)\cdots\trans_{1}(m\EF,z)}>2\lyap_{n_1}+\mathcal O(n_1^\inv)-(\lyap_{n_1}+\kappa \capconst),
  \end{align*}
  i.e. \eqref{eq:lyap_est_1}. Analogously, we prove \eqref{eq:lyap_est_2}.
\end{proof}

%\vfill

\section*{Acknowledgements}
The authors thank Jake Fillman for  clarifications of the literature on localization proofs for quasi-periodic Schr\"{o}dinger operators and critical reading of this manuscript, and Luis Velázquez for illuminating discussions about the connection between squares of quantum walks and CMV matrices.

C. Cedzich acknowledges support by the projet PIA-GDN/QuantEx P163746-484124 and by {\em DGE -- Ministère de l'Industrie}.

A. H. Werner thanks the Villum Fonden for its support via a Villum Young Investigator Grant. 

\bibliographystyle{abbrvArXiv}

\bibliography{ewalklocbib}

\end{document}